\let\oldmaketitle\maketitle
\let\maketitle\relax
\theoremstyle{plain}
\newtheorem{theorem}{Theorem}
\newtheorem{proposition}[theorem]{Proposition}
\newtheorem{lemma}[theorem]{Lemma}
\newtheorem{corollary}[theorem]{Corollary}
\theoremstyle{remark}
\newtheorem{remark}{Remark}
\theoremstyle{definition}
\newtheorem{definition}[theorem]{Definition}
\DeclareMathOperator{\trace}{Tr}
\newcommand{\R}{{\mathbb R}}
\renewcommand{\i}{\mathrm{i}}
\renewcommand{\d}{\,\mathrm{d}}
\renewcommand{\r}{\vec{r}}
\renewcommand{\AA}{\vec{A}}
\newcommand{\onehalf}{\tfrac{1}{2}}
\newcommand{\eps}{\varepsilon}
\newcommand{\jpvec}{\vec{j}}
\newcommand\DensSpace {X}
\newcommand\CurSpace {Y}
\newcommand\xreg {(\rho_{\mathrm{reg}},\jpvec_{\mathrm{reg}})}
\newcommand\uKS {(u_\mathrm{KS},\AA_\mathrm{KS})}
\newcommand\uExt {(u_\mathrm{ext},\AA_\mathrm{ext})}
\newcommand\xp {(\rho,\jpvec)}
\newcommand\xpi {(\rho_i,\jpvec_i)}
\newcommand \affilOslo {\affiliation{
    Hylleraas Centre for Quantum Molecular Sciences, University of Oslo, Norway}}
\newcommand \affilHamburg {\affiliation{
    Max Planck Institute for the Structure and Dynamics of Matter, Hamburg, Germany}}
\title{Kohn--Sham theory with paramagnetic currents: compatibility and functional differentiability} %Paramagnetic Current-Density-Functional Theory}
\author{Andre Laestadius}
\email{andre.laestadius@kjemi.uio.no}
\author{Erik I. Tellgren}
\email{erik.tellgren@kjemi.uio.no}
\author{Markus Penz}
\author{Michael Ruggenthaler}
\author{Simen Kvaal}
\author{Trygve Helgaker}
\begin{document}
\sectionfont{\fontsize{15pt}{17pt}\selectfont}
\subsectionfont{\fontsize{13pt}{15pt}\selectfont}
\fontsize{10pt}{12pt}\selectfont

\twocolumn[{
\begin{@twocolumnfalse}
\oldmaketitle
\begin{abstract}{\small\noindent
Recent work has established Moreau--Yosida regularization as a mathematical tool to achieve rigorous functional differentiability in density-functional theory. In this article, we extend this tool to paramagnetic current-density-functional theory, the most common density-functional framework for magnetic field effects. 
The extension includes a well-defined Kohn--Sham iteration scheme with a partial convergence result. 
To this end, we rely on a formulation of Moreau--Yosida regularization for reflexive and strictly convex function spaces. 
The optimal $L^p$-characterization of the paramagnetic current density $L^1\cap L^{3/2}$ is derived from the $N$-representability conditions.
A crucial prerequisite for the convex formulation of paramagnetic current-density-functional theory, termed \emph{compatibility} between function spaces for the particle density and the current density, is pointed out and analyzed. Several results about compatible function spaces are given, including their recursive construction.
The regularized, exact functionals are calculated numerically for a Kohn--Sham iteration on a quantum ring, illustrating their performance for different regularization parameters.
}\end{abstract}

\centering
\includegraphics[width=0.6\linewidth]{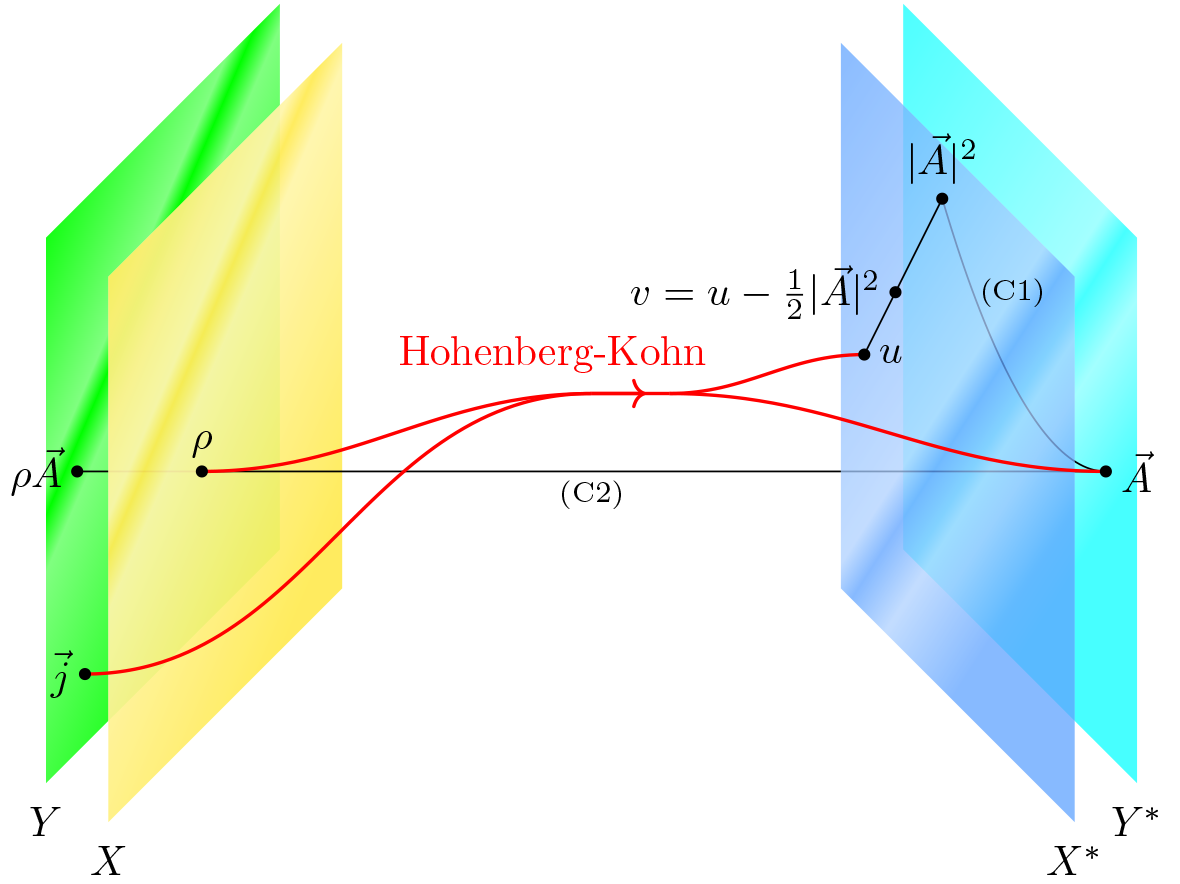}

\vspace{2cm}
\end{@twocolumnfalse}
}]

\newpage
\section{Introduction}
The theoretical foundation of density-functional theory (DFT) was established in a seminal paper by Hohenberg and Kohn \cite{Hohenberg1964}. There it was proven that two potentials that differ by more than a constant cannot share the same ground-state particle density $\rho$ (see Eq.~\eqref{eq:rho-def} for definition). This fact is referred to as the Hohenberg--Kohn (HK) theorem. 
Using this result, the Schr\"odinger equation was replaced by a minimization problem involving a universal density functional (HK variational principle). 
The work by Lieb~\cite{Lieb83} provided an abstract reformulation of DFT that eliminates some technical difficulties with the HK formulation and constitutes a more tractable framework for rigorous mathematical analysis. Lieb's formulation relies on Legendre--Fenchel transformations between the ground-state energy and a universal density functional, analogous to the use of Legendre transformations in thermodynamics and classical mechanics. The HK theorem becomes recast into a fact about subgradients of convex functionals that are mapped one-to-one by Legendre--Fenchel transformations~\cite{Tellgren2012,Kvaal2014}.\newline
\indent As far as practical purposes are concerned, DFT was first converted into a feasible algorithm for electronic structure calculations by Kohn and Sham~\cite{KS}. Here, both the unknown density of the full system and the effective Kohn--Sham (KS) potential for the non-interacting system are solved for in an iterative manner. Even though the important question of convergence of this procedure has been addressed in several works~\cite{Cances2001,CANCES_JCP118_5364,Wagner2013,KSpaper2018,LammertBivariate}, it has only very recently been answered positively for finite-dimensional settings.~\cite{penz2019guaranteed}\newline
\indent The motivation to include current densities and not just the particle density is to obtain a universal functional modelling the internal energy of magnetic systems. In terms of Lieb's Legendre--Fenchel description, the current couples to the vector potential that now also enters the theory to account for the magnetic field. Recent work in current-density-functional theory (CDFT) has been devoted to the extension of the HK theorem, the HK variational principle, and the KS iteration scheme to include current densities~\cite{Vignale1987,Diener,Capelle2002}, as well as to highlight the complexity of such a generalization~\cite{Tellgren2012,LaestadiusBenedicks2014,LaestadiusBenedicks2015}. 
 Other approaches are feasible as well, e.g., the magnetic-field density-functional theory (BDFT) of Grace and Harris~\cite{GRAYCE_PRA50_3089}, where a semi-universal functional is employed instead. There exists also a convexified formulation, in which BDFT and paramagnetic CDFT are related to each other by partial Legendre--Fenchel transformations~\cite{Tellgren2018,REIMANN_JCTC13_4089}. Furthermore, the physically important case of linear vector potentials (uniform magnetic fields) has been theoretically studied in linear-vector-potential density-functional theory (LDFT) without the need to include current densities~\cite{Tellgren2018}. 
Works beyond the current density generalization exist too, e.g., spin-current density-functional theory, reduced-density-matrix-functional theory, Maxwell--Schr\"odinger density-functional theory (MDFT), and quantum-electrodynamical density-functional theory (QEDFT)~\cite{PhysRevB.96.035141,AyersGoldenLevy2006,giesbertz2019one,PhysRevA.97.012504,Rugg2017}. For more generalized density-functional theories see Ref.~\citenum{AyersFuentealba2009} and the references therein, e.g., the kinetic-density-functional theory of Sim \emph{et al.}~\cite{sim2003}, the work of Ayers~\cite{Ayers-JMP-2005} on $k$-density-functional theory, and Higuchi-Higuchi~\cite{HIGUCHI_PRB69_035113} who explored the use of different physical quantities as variables of the theory.\newline
\indent For mathematical reasons, CDFT is formulated in terms of the paramagnetic current density $\jpvec$ (see Eq.~\eqref{eq:jpvec-def} for definition) rather than with the gauge-invariant total current density. A theoretical foundation in the sense of a HK theorem for the total current density has not yet been proven and its existence remains an open question in the general case~\cite{Tellgren2012,LaestadiusBenedicks2014}. However, even if such a result could be shown, a HK variational principle does not exist for the total current density~\cite{LaestadiusBenedicks2015}. Circumventing these problems may require the Maxwell--Schr\"odinger variational principle in place of the standard one~\cite{PhysRevA.97.012504}. For the CDFT that makes use of the paramagnetic current density, it is well-known that there are counterexamples that rule out any analogue of the HK theorem~\cite{Capelle2002,Tellgren2012,LaestadiusBenedicks2014}. Nevertheless, since the particle density and the paramagnetic current density determine the non-degenerate ground state (see Ref.~\citenum{LaestadiusTellgren2018} for results in the degenerate case), a universal Levy--Lieb \cite{Levy79,Lieb83} constrained-search functional can be set up, as done by Vignale and Rasolt~\cite{Vignale1987}. This functional can be extended to a Lieb functional that in this case also depends on the paramagnetic current density (for a first attempt see Ref.~\citenum{Laestadius2014} with the choice of domain $(\rho,\jpvec)\in (L^1\cap L^3)\times \vec{L}^1$). \newline
\indent Since the Lieb functional within standard DFT suffers from non-differentiability~\cite{Lammert2007}, a property that CDFT inherits, we address this particular problem and formulate a regularized theory in a Banach space setting. We here apply our recent work~\cite{KSpaper2018} that also extends the mathematical formalism of paramagnetic CDFT in Ref.~\citenum{Laestadius2014}. The need for differentiability---a fact that is usually overlooked in textbooks---is connected to the variational derivation and analysis of the Kohn--Sham scheme. This task, in the setting of paramagnetic CDFT, is the main aim of this work. \newline
\indent To set up a rigorous CDFT including the corresponding Kohn--Sham scheme, which is borrowed from our previous work~\cite{KSpaper2018} and here baptized Moreau--Yosida--Kohn--Sham optimal damping algorithm (MYKSODA), we introduce and discuss the condition of \emph{compatibility} between function spaces 
for the scalar and vector potentials on the one hand and 
for the paramagnetic current and the total physical current densities on the other. This condition is necessary both for the convex formulation of CDFT and the subsequent Moreau--Yosida regularization procedure. Moreover, to maintain compatibility the regularization procedure requires a 
Banach space formulation and we make use of our results employing \emph{reflexive} Banach spaces~\cite{KSpaper2018}. In this respect the approach presented here differs from that in standard DFT where a Hilbert space formulation has been previously considered~\cite{Kvaal2014}, which does not allow the necessary compatibility in CDFT. However, to apply the Banach space formulation outlined in Ref.~\citenum{KSpaper2018}, a suitable function space for the paramagnetic current density first needs to be identified. The choice $\vec L^1$ from Ref.~\citenum{Laestadius2014} cannot be used for this purpose since it is not reflexive. It is therefore crucial to first prove that the paramagnetic current density is an element of $L^p$ for some $1<p<+\infty$. We prove in Corollary~\ref{lem:fLp} that each component of $\jpvec$ is an element of $L^{3/2}$ under the assumption of finite kinetic energy.\newline
\indent This article is structured as follows. After introducing the basic quantum-mechanical model for paramagnetic CDFT in Sec.~\ref{sec:model}, we define suitable function spaces for particle and current densities in Sec.~\ref{sec:functionSp}. In such a setting, the usual constrained-search functionals of DFT are defined and the energy functional and generalized Lieb functional are subsequently set up in Sec.~\ref{sec:functionals}. These functionals serve as the primary objects for a further study of the theory in terms of convex analysis. Here a first problem arises within CDFT: the lack of concavity of the energy functional. As a further ingredient of a well-defined Kohn--Sham iteration scheme, finiteness of the energy functional (and its concave version) is proven in Sec.~\ref{sec:Efiniteness}. Like the authors recently showed~\cite{KSpaper2018}, the variational Kohn--Sham construction can only be rigorously set up for a regularized theory. The respective form of Moreau--Yosida regularization is introduced and applied to the setting at hand in~Sec.~\ref{sec:MY}. Finally, the stage is set for a discussion of the Kohn--Sham iteration scheme in Sec.~\ref{sec:KS} and its precise formulation as MYKSODA in Sec.~\ref{sec:MYKSODA}. We note possible convergence issues in the particular setting of a two-particle singlet state in Sec.~\ref{sec:KSconvergence}. We conclude in Sec.~\ref{sec:numerApp} with a numerical study of the MYKSODA. For readers that are less familiar with Lebesgue spaces and functional analysis in general we recommend Refs.~\citenum{LiebLoss} and~\citenum{Teschl}, while for the tools borrowed from convex analysis we point to Ref.~\citenum{VANTIEL}.

\section{Paramagnetic CDFT}

\subsection{Ground-state model}
\label{sec:model}

In what follows, we consider the Hamiltonian of an $N$-electron system to be specified by an external scalar potential $v:\mathbb R^3 \to \mathbb R$ and an external vector potential $\AA: \mathbb R^3\to\mathbb R^3$. The components of $\AA$ and other vectors are denoted $A^k$ and are not to be confused with the Euclidean norm squared, $\vert \AA\vert^2= \AA\cdot \AA$. The physical kinetic energy operator and electron-electron repulsion are given by (in SI based atomic units),
\begin{align*}
T(\AA) & = \sum_{k=1}^N \frac{1}{2} \left(-\i\nabla_{\r_k} + \AA(\r_k) \right)^2,
\\
W & = \sum_{k=2}^{N} \sum_{l=1}^{k-1} \frac{1}{\vert \r_{k} - \r_l\vert }.
\end{align*}
The full Hamiltonian then reads
\begin{equation}\label{eq:Hdef}
H^{\lambda}(v,\AA) = T(\AA) + \lambda W + \sum_{k=1}^N v(\r_k),
\end{equation}
where a scale factor $\lambda \geq 0$ is included in front of the electron-electron repulsion term. This means $H^0$ is the Hamiltonian of a non-interacting system while  the usual interacting system is retrieved with $H^1$. This extra parameter is motivated by its usefulness when addressing the KS theory and is standard in the literature. \newline
\indent We consider wavefunctions $\psi = \psi(\vec x_1, \dots,\vec x_N)$, where $\vec x_k = (\r_k,s_k)$ is the spatial and spin coordinate of the $k$-th particle. The wavefunctions are antisymmetric elements of the $N$-electron space $L^2((\mathbb R^3\times \{\uparrow,\downarrow\})^{N})$, i.e., the usual Hilbert space of quantum mechanics. We will be interested in ground-state CDFT, where several options are available to treat the spin degrees of freedom. Firstly, we could formulate a theory for the global ground state, obtained through minimization over all spin degrees of freedom, in which case a spin-Zeeman term could also be included in the Hamiltonian. Secondly, we could instead formulate a theory for the lowest singlet state ($S^2=0$) or the lowest state with some other prescribed value of the spin quantum number $S^2$. For simplicity, we formulate a ground-state CDFT for the lowest singlet energy and adapt our notation accordingly. However, our analysis is mostly independent of this choice and applies equally well to a theory for global ground states. Thus without loss of generality the spin coordinate will in the sequel be omitted. \newline 
\indent All wavefunctions are assumed to have finite kinetic energy,
\begin{equation*}
\mathcal K( \psi ) = \frac{1}{2} \sum_{k=1}^N \int_{\R^{3N}} |\nabla_{\vec r_k} \psi|^2 \d \r_1 \dots \d \r_N  <+\infty.
\end{equation*}
We further assume $L^2$ normalization of $\psi$ and denote the $L^p$ norm by $\Vert \cdot \Vert_p$, $1\leq p \leq \infty$. Henceforth the particle number $N$ will be fixed and we define the set of admissible wavefunctions
\begin{align*}
    \mathcal W_N = \{ \psi: \Vert \psi \Vert_2 &= 1, \mathcal K(\psi) <+\infty, \\
    &\text{ $\psi$ is antisymmetric}\}.
\end{align*}
Moreover, $\gamma_\psi = \vert \psi \rangle \langle \psi \vert$ denotes the density matrix of a pure state $\psi$ and  
$\mathcal P_N = \{  \gamma_\psi : \psi\in \mathcal W_N  \}$ 
is the set of such states. The set of mixed states is given by (where the sum over $n$ can be infinite)
\begin{align*}
\mathcal D_N = &\left\{\gamma = \sum_n p_n \gamma_n  \, : \, \gamma_n\in \mathcal P_N, \right. \\
&\left.\;\;
\sum_n p_n =1, p_n \geq 0, \trace(\gamma T(\vec 0)) < +\infty \right\}.
\end{align*}
Note that in case $\gamma = \gamma_\psi \in \mathcal P_N$ we have 
$\trace(\gamma T(\vec 0)) = \mathcal K(\psi)$.

The energy functional for the ground-state energy can be written in the following alternative forms
\begin{equation}\label{eq:E-functional}
\begin{aligned}
E^{\lambda}(v,\AA) &= \inf_{\gamma\in \mathcal D_N} \trace(\gamma H^{\lambda}(v,\AA)) \\
&= \inf_{\gamma \in \mathcal P_N} \trace(\gamma H^{\lambda}(v,\AA)) \\
& = \inf_{\psi\in\mathcal W_N} \langle \psi, H^\lambda(v,\AA) \psi  \rangle .
\end{aligned}
\end{equation}
Thus if a minimizer $\gamma\in \mathcal D_N$ exists one can always also obtain a pure ground state selected from one of the eigenvectors of $\gamma$.

\subsection{Function spaces for densities}
\label{sec:functionSp}

For any $\psi\in \mathcal W_N$, we define the particle density and the paramagnetic current density, respectively, according to
\begin{align}
%\begin{split}
\rho_\psi (\r_1) &= N \int_{\mathbb R^{3(N-1)}}   |\psi|^2 \d \r_2 \dots \d \r_N, \label{eq:rho-def} \\
\jpvec_\psi(\r_1) &= N\,\text{Im} \int_{\mathbb R^{3(N-1)}}\psi^*\nabla_{\r_1} \psi \, \d \r_2\dots \d \r_N. \label{eq:jpvec-def}
%\end{split}
\end{align}
The aim of this section is to extract as much information as possible about the regularity of $\rho_\psi$ and $\jpvec_\psi$ in terms of $L^p$ spaces from the assumption that $\psi\in \mathcal W_N$. This will define the sets of admissible densities. 

To avoid confusion, a word or two on our notation is appropriate at this point. Since the paramagnetic current density is the main current density of consideration we omit the usual superscript (or subscript) ``p'' for paramagnetic in $\jpvec^\mathrm{p}$. We  write $\underline{\r} = (\r_1,\dots,\r_N) \in\mathbb R^{3N}$, $\nabla = \nabla_{\underline \r}$,
and let $\r$ denote any $\r_i\in\mathbb R^3$ but typically $\r_1$.
Further $\mathcal H^k(\R^n)$ denotes the Sobolev space that includes all functions in $L^2(\R^n)$ with weak derivatives up to $k$-th order in $L^2(\R^n)$. 
($\mathcal H$ should not be confused with the Hamiltonian $H$.)
Finally, $\vec X = X \times X \times X$ is the triple copy of a Banach space $X$, here mostly used for $L^p$ spaces as $\vec L^p$.
\newline
\indent Hoffmann-Ostenhof and Hoffmann-Ostenhof (see Eq.~(3.10) in Ref.~\citenum{HoffmannOstenhof1977}) and Lieb (Theorem~1.1 in Ref.~\citenum{Lieb83}) have shown that the von Weizs\"acker term involving $\rho_\psi$ is bounded by the kinetic energy of $\psi$, i.e.,
\begin{equation}\label{eq:kin-bound}
\frac 1 2 \int_{\mathbb R^3}| \nabla \rho_\psi^{1/2}|^2 \d \r \leq  \mathcal K(\psi),
\end{equation}
and therefore $\rho_\psi\in \mathcal I_N$ if $\psi\in \mathcal W_N$, where
\[
\mathcal I_N = \left\{ \rho\in L^1(\mathbb R^3): \rho\geq 0, \Vert\rho\Vert_1 = N, \rho^{1/2}\in \mathcal H^1(\mathbb R^3) \right\}
\]
denotes the set of $N$-representable particle densities. Even though the Hilbert space $\mathcal H^2(\R^{3N})$ is the natural domain of the kinetic energy operator, $\psi\in \mathcal H^1(\R^{3N})$ is sufficient to guarantee finite $\mathcal K(\psi)$.
The Sobolev inequality in $\mathbb R^3$ (see, e.g., Theorem~8.3 in Ref.~\citenum{LiebLoss}),
\begin{align}
\left(\int_{\mathbb R^3}|f(\r)|^6 \d \r\right)^{1/3} \leq S \int_{\mathbb R^3}|\nabla f(\r)|^2 \d \r,
\label{Sob}
\end{align}
applied to $f= \rho_\psi^{1/2}$ yields
\begin{equation}\label{Sob2}
\Vert \rho_\psi \Vert_3 \leq S \int_{\mathbb R^3} \vert \nabla \rho_\psi^{1/2}\vert^2 \d \r, \quad S = \frac{4}{3(16\pi^2)^{1/3}}. 
\end{equation}
Consequently $\mathcal I_N \subset L^1\cap L^3$.
\begin{remark}\label{remark:interpolation}
	We make a brief comment concerning interpolation. For $1\leq q \leq p\leq r\leq +\infty$, if $f\in L^q\cap L^r$ then by H\"older's inequality
	\[
	\Vert f \Vert_p^p \leq \Vert f \Vert_r^{r(p-q)/(r-q)}\Vert f \Vert_q^{q(r-p)/(r-q)}
	\]
	and thus $f\in L^p$.
\end{remark}

We proceed by summarizing criteria that follow from the works of Lieb and Kato for the space of particle densities in terms of $L^p$ spaces.

\begin{proposition}[Lieb~\cite{Lieb83} and Kato~\cite{Kato1951}]\label{prop:rhoLp}
For $\psi\in \mathcal W_N$, $\rho_\psi \in L^p$ with $p\in [1,3]$ and in particular $\rho_\psi$ is an element of the Hilbert space $L^2$.  
Moreover, $\psi\in \mathcal H^2$ implies $\rho_\psi \in L^p$ for all $p\in [1,\infty]$.
\end{proposition}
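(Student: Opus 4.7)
The plan is to combine the $L^1$ normalization of $\rho_\psi$ with the $L^3$ bound that is already implicit in Eqs.~\eqref{eq:kin-bound}--\eqref{Sob2}, and then to interpolate. The sharper assertion under the additional assumption $\psi\in\mathcal H^2$ follows by the same interpolation argument, once $\rho_\psi\in L^\infty$ has been established via Kato's regularity theory.

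First, since $\psi\in\mathcal W_N$ is $L^2$-normalized, $\rho_\psi\geq 0$ and $\Vert\rho_\psi\Vert_1=N$, so $\rho_\psi\in L^1(\mathbb R^3)$. The Hoffmann--Ostenhof inequality \eqref{eq:kin-bound} gives $\rho_\psi^{1/2}\in\mathcal H^1(\mathbb R^3)$, and the Sobolev inequality \eqref{Sob2} applied to $f=\rho_\psi^{1/2}$ then yields $\rho_\psi\in L^3(\mathbb R^3)$ with $\Vert\rho_\psi\Vert_3 \leq 2S\,\mathcal K(\psi)$. Remark~\ref{remark:interpolation}, applied with $q=1$ and $r=3$, therefore produces $\rho_\psi\in L^p(\mathbb R^3)$ for every $p\in[1,3]$, covering the Hilbert-space case $p=2$ in particular.

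For the second statement, I would promote $\psi\in\mathcal H^2(\mathbb R^{3N})$ to $\rho_\psi\in L^\infty(\mathbb R^3)$ by appealing to Kato's regularity results for Schr\"odinger-type operators~\cite{Kato1951}, which provide enough pointwise control of the partial integrals defining $\rho_\psi$. A second application of Remark~\ref{remark:interpolation}, now with $q=1$ and $r=\infty$, then fills in every intermediate exponent and gives $\rho_\psi\in L^p(\mathbb R^3)$ for all $p\in[1,\infty]$.

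The main obstacle is precisely the step $\psi\in\mathcal H^2 \Rightarrow \rho_\psi\in L^\infty$: Sobolev embedding in $\mathbb R^{3N}$ does not give $\psi\in L^\infty$ as soon as $3N\geq 4$, i.e.\ $N\geq 2$, so an $L^\infty$-bound on $\rho_\psi$ cannot be read off from a pointwise bound on $\psi$ alone. It has to be extracted from finer properties of the partial-integral density: for instance, by computing $\Delta_{\r_1}\rho_\psi$ in terms of first and second partials of $\psi$ with respect to $\r_1$, estimating each term in $L^1(\mathbb R^3)$ via Cauchy--Schwarz together with $\psi\in\mathcal H^2$, and then combining with the $L^3$-bound on $\rho_\psi$ from the first part through a Sobolev-type argument in three dimensions (or, more directly, by invoking Kato's eigenfunction boundedness results). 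Once this $L^\infty$ control is in hand, the rest of the argument is simply Remark~\ref{remark:interpolation}.
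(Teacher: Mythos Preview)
Your proposal is correct and follows essentially the same route as the paper's proof: the $L^1\cap L^3$ bound via the Hoffmann--Ostenhof inequality and Sobolev, then interpolation (Remark~\ref{remark:interpolation}) for the first part; and for the second part, the $L^\infty$ bound on $\rho_\psi$ from Kato's work followed by interpolation. The paper is simply more terse and cites the precise ingredient---Lemma~3 of Ref.~\citenum{Kato1951}, with the identification $A_i f(\r_i)\leftrightarrow(\rho(\r_i)/N)^{1/2}$---whereas you give a more exploratory discussion of why the step $\psi\in\mathcal H^2\Rightarrow\rho_\psi\in L^\infty$ is non-trivial for $N\geq 2$ before pointing to Kato.
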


\begin{proof}
The first part follows from Lieb~\cite{Lieb83} and Remark~\ref{remark:interpolation}. By Lemma~3 in Ref.~\citenum{Kato1951}, the assumption $\psi\in \mathcal H^2$ gives $\rho_\psi\in L^\infty$. (Note that $A_i f(\r_i)$ in Ref.~\citenum{Kato1951} corresponds to $(\rho(\r_i)/N)^{1/2}$ in our notation.)
\end{proof}

An obvious limitation for the current density is that every component of $\jpvec_\psi$ is in $L^1$~\cite{Laestadius2014}. Here, by better exploiting the properties of $\psi\in \mathcal W_N$, we will be able to further characterize the set of current densities. \newline
\indent We start by giving some definitions. The kinetic-energy density $\tau_\psi:\mathbb R^3\rightarrow \mathbb R_+$ is given by
\[
\tau_\psi(\r) = N \int_{\mathbb R^{3(N-1)}} 
|\nabla_{\r} \psi |^2 \d \r_2\dots \d \r_N,
\]
and relates to the already defined kinetic energy by $\mathcal K(\psi) = \frac{1}{2} \Vert \nabla \psi\Vert_{2}^2 = \frac{1}{2}\|\tau_\psi\|_1$. We see that $\psi$ being an element of $\mathcal W_N$ guarantees finite $\mathcal H^1$ norm and thus $\mathcal W_N \subset \mathcal H^1$.
Furthermore, let $\r\in \mathbb R^3$ be written $\r=(r^1,r^2,r^3)$ and define the component-wise kinetic energy density
\[
\tau_{\psi}^k(\r) = N \int_{\mathbb R^{3(N-1)}}  |\nabla_{r^k} \psi |^2 \d \r_2\dots \d \r_N
\]
and $\mathcal K^k(\psi) = \frac{1}{2}\|\tau_\psi^k\|_1$.
A direct computation gives that the usual Sobolev norm satisfies
\begin{equation*}
\Vert \psi\Vert_{\mathcal H^1}^2 = \int_{\mathbb R^3}\left( \rho_\psi(\r)/N  +  \tau_\psi(\r)\right) \d \r = 1 + 2\mathcal K (\psi).
\end{equation*}

An important bound that will be used subsequently is the following one
\begin{align}
\left\vert  N \int_{\mathbb R^{3(N-1)}} \psi^*\nabla_{\r}\, \psi \d \r_2\dots \d \r_N \right\vert
&\leq \tau_\psi(\r)^{1/2} \rho_\psi(\r)^{1/2}, \label{eq:taurho}
\end{align}
which is a consequence of the Cauchy--Schwarz inequality. Since both $\tau_\psi^{1/2}$ and $\rho_\psi^{1/2}$ are $L^2$ functions, integration and the Cauchy--Schwarz inequality give $\jpvec_\psi\in \vec L^1$. Indeed
\[
\int_{\mathbb R^3} |\jpvec_\psi(\r)|\d \r \leq \frac 1 2 (2N\,\mathcal K(\psi))^{1/2},
\]
see Proposition~3 in Ref.~\citenum{Laestadius2014}. The idea is now to further extend such $\vec L^p$-characterizations. With $\jpvec_\psi = (j_\psi^1,j_\psi^2,j_\psi^3)$ we now state and prove 

\begin{lemma} \label{lem:fLp}
	Set $f_{\psi}^k=\partial_k \rho_\psi + 2\i j_\psi^k $ for $k=1,2,3$ and $\psi\in \mathcal W_N$. Then $f_{\psi}^k \in L^p(\mathbb R^3)$ for $1\leq p\leq 3/2$.
\end{lemma}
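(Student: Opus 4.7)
The plan is to combine the real and imaginary parts into a single complex integral, bound it pointwise via Cauchy--Schwarz, and then use H\"older-type inequalities to handle the two endpoints $p=1$ and $p=3/2$, with interpolation filling in the rest.

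First I would establish the algebraic identity
\[
 f_\psi^k(\r) = 2N\int_{\R^{3(N-1)}} \psi^*\,\partial_k\psi \,\d\r_2\cdots\d\r_N ,
\]
using that $\partial_k\rho_\psi = 2N\,\mathrm{Re}\!\int \psi^*\partial_k\psi$ (differentiating $|\psi|^2$ and exchanging derivative and integral, legitimate since $\psi\in\mathcal{H}^1$), while $2\i j_\psi^k = 2\i N\,\mathrm{Im}\!\int \psi^*\partial_k\psi$ by definition, so the two contributions combine into $2N\int\psi^*\partial_k\psi$. Applying the Cauchy--Schwarz inequality in the integration variables $(\r_2,\dots,\r_N)$ then yields the pointwise bound
\[
 |f_\psi^k(\r)| \leq 2\,\rho_\psi(\r)^{1/2}\,\tau_\psi^k(\r)^{1/2},
\]
which is the single inequality driving everything that follows.

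Next I would treat the two endpoints. For $p=1$, another application of Cauchy--Schwarz in the spatial variable gives $\|f_\psi^k\|_1 \leq 2\|\rho_\psi\|_1^{1/2}\|\tau_\psi^k\|_1^{1/2}$, which is finite because $\|\rho_\psi\|_1 = N$ and $\|\tau_\psi^k\|_1 = 2\mathcal{K}^k(\psi) \leq 2\mathcal{K}(\psi) < +\infty$. For $p=3/2$, raise the pointwise bound to the power $3/2$ and integrate, then apply H\"older with conjugate exponents $4$ and $4/3$:
\[
 \int_{\R^3}|f_\psi^k|^{3/2}\d\r \leq 2^{3/2}\!\int \rho_\psi^{3/4}\tau_\psi^{k,3/4}\d\r \leq 2^{3/2}\|\rho_\psi\|_3^{3/4}\|\tau_\psi^k\|_1^{3/4}.
\]
Here $\|\rho_\psi\|_3$ is finite by the Sobolev inequality \eqref{Sob2} applied to $\rho_\psi^{1/2}\in\mathcal{H}^1$ (guaranteed by \eqref{eq:kin-bound}), and $\|\tau_\psi^k\|_1$ is finite as above. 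Thus $f_\psi^k \in L^{3/2}$.

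Finally, Remark~\ref{remark:interpolation} interpolates between $L^1$ and $L^{3/2}$ to cover all $1\leq p\leq 3/2$. There is no real obstacle here: the only nontrivial idea is to form the complex combination $f_\psi^k = \partial_k\rho_\psi + 2\i j_\psi^k$ so that Cauchy--Schwarz gives a clean pointwise estimate in terms of $\rho_\psi$ and $\tau_\psi^k$, after which the Sobolev embedding $\rho_\psi\in L^3$ and the H\"older pairing $(4,4/3)$ are exactly what is needed to push the integrability up to the Sobolev-dictated endpoint $p=3/2$.
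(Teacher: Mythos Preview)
Your proof is correct and follows essentially the same route as the paper: represent $f_\psi^k$ as an integral of $\psi^*\partial_k\psi$, apply Cauchy--Schwarz for the pointwise bound $|f_\psi^k|\le C\,\rho_\psi^{1/2}(\tau_\psi^k)^{1/2}$, and finish with H\"older using $\rho_\psi\in L^1\cap L^3$ and $\tau_\psi^k\in L^1$. The only cosmetic difference is that the paper runs a single H\"older estimate valid for all $1\le p\le 3/2$ at once (choosing the conjugate exponent so that $\rho_\psi$ lands in $L^{pq/2}\subseteq L^1\cap L^3$), whereas you handle the two endpoints and then interpolate via Remark~\ref{remark:interpolation}.
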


\begin{proof}
Since
\[
f_{\psi}^k(\r) = N \int_{\mathbb R^{3(N-1)}} \psi^*\partial_{r^k} \psi \d \r_2\dots \d \r_N,
\]
we have similar to Eq.~\eqref{eq:taurho} for $1\leq p\leq 3/2$ 
\begin{align}\label{eq:frhotau}
|f_\psi^k(\r)|^p \leq \left( \rho_\psi(\r)  \tau_\psi^k(\r)  \right)^{p/2}    
\end{align}
by the Cauchy--Schwarz inequality.
Next, we use H\"older's inequality with $q$ defined by $p/2 + 1/q = 1$ such that
\begin{align}
	&\left(\int_{\mathbb R^3}|f_{\psi}^k(\r)|^{p}\d \r\right)^{1/p} \nonumber\\
	&\quad\leq \left[ \left(\int_{\mathbb R^3}\rho_\psi(\r)^{pq/2} \d \r\right)^{1/q} \left(\int_{\mathbb R^3}\tau_{\psi}^k(\r) \d \r\right)^{p/2}\right]^{1/p}\nonumber \\
	&\quad =  \Vert \rho_\psi\Vert_{pq/2}^{1/2} (2 \mathcal K^k(\psi))^{1/2}.
\label{eq:PbyH}
\end{align}
To conclude, we note that, by the assumption on $p$, we have $1 \leq pq/2 \leq 3$ and recall that $\rho$ is in $L^1 \cap L^3$.
\end{proof}

Note that $\vec f_\psi$ in Lemma~\ref{lem:fLp} can be seen as a generalized
complex current, similar to the one considered by Tokatly~\cite{Tokatly2011} in a lattice version of time-dependent CDFT. It has also been considered before as ``momentum density''~\cite{TellgrenNrep}.

As a direct consequence of Lemma~\ref{lem:fLp}, we have our main result about function spaces for the current density
\begin{theorem}\label{th:jLp}
	For $\psi\in\mathcal W_N$, each component $j_\psi^k$ of the paramagnetic current density $\jpvec_\psi$ is in $L^p$ for any $1\leq p\leq 3/2$ and we write $\jpvec_\psi \in \vec L^p$. In particular, we have 
\begin{align*}
\Vert j_\psi^k \Vert_1 \leq(2N\, \mathcal K^k(\psi))^{1/2}, \quad \Vert j_\psi^k\Vert_{3/2} \leq  S^{1/2} 2\mathcal K^k(\psi),
\end{align*}
with the constant $S$ given by Eq.~\eqref{Sob2}.
\end{theorem}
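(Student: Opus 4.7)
The theorem is essentially a repackaging of Lemma~\ref{lem:fLp} together with the Sobolev control of $\|\rho_\psi\|_3$ already established in Eq.~\eqref{Sob2}, so the plan is short but the constants require care.

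First, I would reduce to the auxiliary quantity $f_\psi^k$ of Lemma~\ref{lem:fLp}. Since $j_\psi^k$ is (half of) the imaginary part of $f_\psi^k$ one has the pointwise estimate $|j_\psi^k(\r)|\le \tfrac{1}{2}|f_\psi^k(\r)|$, so Lemma~\ref{lem:fLp} immediately yields $j_\psi^k \in L^p(\mathbb{R}^3)$ at the endpoints $p=1$ and $p=3/2$. By the interpolation statement in Remark~\ref{remark:interpolation} (applied componentwise, noting $j_\psi^k$ is real-valued), membership extends to all $1\le p\le 3/2$. This disposes of the qualitative part of the theorem.

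Next, for the quantitative bounds I would revisit the key pointwise inequality
\[
|j_\psi^k(\r)| \le \bigl(\rho_\psi(\r)\,\tau_\psi^k(\r)\bigr)^{1/2},
\]
obtained exactly as in \eqref{eq:taurho} by Cauchy--Schwarz inside the integral defining $j_\psi^k$. For $p=1$, integrate this pointwise bound and apply Cauchy--Schwarz on $\mathbb{R}^3$ together with $\|\rho_\psi\|_1 = N$ and $\|\tau_\psi^k\|_1 = 2\mathcal{K}^k(\psi)$ to obtain $\|j_\psi^k\|_1 \le (2N\,\mathcal{K}^k(\psi))^{1/2}$. For $p=3/2$, I would invoke \eqref{eq:PbyH} with $p=3/2$ (so $q=4$ and $pq/2 = 3$), giving $\|f_\psi^k\|_{3/2} \le \|\rho_\psi\|_3^{1/2}(2\mathcal{K}^k(\psi))^{1/2}$. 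To eliminate the $L^3$-norm of $\rho_\psi$, chain \eqref{eq:kin-bound} with the Sobolev inequality \eqref{Sob2} to replace $\|\rho_\psi\|_3$ by an explicit multiple of the kinetic energy, producing the stated constant $S^{1/2}\cdot 2\mathcal{K}^k(\psi)$ (after absorbing the $\tfrac{1}{2}$ from $|j_\psi^k|\le\tfrac{1}{2}|f_\psi^k|$).

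The main obstacle is constant tracking: the Sobolev inequality \eqref{Sob2} is genuinely three-dimensional and therefore naturally bounds $\|\rho_\psi\|_3$ by the full kinetic energy $\mathcal{K}(\psi)=\sum_{i=1}^3 \mathcal{K}^i(\psi)$ rather than by the single component $\mathcal{K}^k(\psi)$. One must be careful about whether to state the final bound in terms of $\mathcal{K}(\psi)$ or $\mathcal{K}^k(\psi)$; the latter, sharper-looking form requires combining the component-wise Hölder step with $\mathcal{K}^k(\psi) \le \mathcal{K}(\psi)$ in the right place, which is the one spot where one must be careful not to lose (or silently overstate) a constant. Everything else is a direct transcription of Lemma~\ref{lem:fLp}.
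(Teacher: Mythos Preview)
Your approach is essentially identical to the paper's: both use the pointwise Cauchy--Schwarz bound $|j_\psi^k|\leq(\rho_\psi\tau_\psi^k)^{1/2}$ (equivalently, $|j_\psi^k|\le|f_\psi^k|$) together with Eq.~\eqref{eq:PbyH} at the endpoints $p=1$ and $p=3/2$, then interpolate. Your caution about the constant in the $L^{3/2}$ bound is well placed---the Sobolev step indeed introduces the full $\mathcal K(\psi)$ rather than $\mathcal K^k(\psi)$, so the honest bound is $\|j_\psi^k\|_{3/2}\le 2S^{1/2}(\mathcal K(\psi)\mathcal K^k(\psi))^{1/2}$, a point the paper's proof passes over silently.
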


\begin{proof}
Set $p=1$ and apply Lemma~\ref{lem:fLp}. Since $pq/2=1$ and $\Vert \rho_\psi\Vert_1=N$, Eq.~\eqref{eq:PbyH} gives 
	\begin{align*}
	\Vert j_\psi^k \Vert_1 \leq \int_{\mathbb R^3}|f_{\psi}^k(\r)|\d \r \leq (2N\, \mathcal K^k(\psi))^{1/2}.
	\end{align*}
With the choice $p=3/2$ instead, we have $pq/2= 3$. Eq.~\eqref{eq:PbyH} then reduces to
	\begin{align*}
	\Vert j_\psi^k\Vert_{3/2} \leq S^{1/2} 2 \mathcal K^k(\psi),
	\end{align*}
where we have also used the Sobolev inequality \eqref{Sob} and Eq.~\eqref{eq:kin-bound}.
Using interpolation (see Remark~\ref{remark:interpolation}), it follows
	\[
	\Vert j_\psi^k\Vert_p\leq \left[ N^{3-2p} S^{3p-3} (2 \mathcal K^k(\psi))^{4p-3}\right]^{\frac 1 {2p}}
	\]
and thus $j_\psi^k\in L^p$ for $1\leq p \leq 3/2$.
\end{proof}

From the proof of Lemma~\ref{lem:fLp} (the Cauchy--Schwarz inequality applied to Eq.~\eqref{eq:frhotau} with $p=2$), we obtain the current correction to the von Weizs\"acker kinetic energy (see Ref.~\citenum{BATES_JCP137_164105} and note that this sharpens Theorem~14 in Ref.~\citenum{Laestadius2014})
\begin{align}\label{eq:Furches}
    \frac 1 2 \int_{\R^3} \vert \nabla\sqrt{\rho_\psi} \vert^2\d\r + \frac 1 2\int_{\R^3} \frac{\vert \jpvec_\psi \vert^2}{\rho_\psi} \d\r\leq \mathcal K(\psi).
\end{align}
The well-known inequality $\frac 1 2\int_{\R^3} \vert \jpvec_\psi \vert^2 \rho_\psi^{-1} \d\r\leq \mathcal K(\psi)$ is immediate from Eq.~\eqref{eq:Furches}. \newline
\indent 
We next note that the space for the current density cannot be further restricted since $j_\psi^k\notin L^p$, $p>3/2$, for some $\psi\in\mathcal W_N$. Before proving this fact, a further characterization of $\jpvec_\psi$ using $\psi \in \mathcal H^2$ is given.

\begin{proposition}\label{prop:L2}
$\psi \in \mathcal H^2$ implies $\jpvec_\psi\in \vec L^2$.
\end{proposition}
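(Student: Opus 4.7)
The plan is to reuse the argument of Lemma~\ref{lem:fLp} but push the exponent from $p = 3/2$ all the way to $p = 2$. The reason this was not available under the bare assumption $\psi \in \mathcal{W}_N$ is that the Hölder step in Eq.~\eqref{eq:PbyH} then forces $\rho_\psi$ to lie in $L^{pq/2} = L^\infty$, which the $L^1 \cap L^3$ information supplied by the Hoffmann-Ostenhof inequality and the Sobolev embedding does not guarantee. Under the stronger hypothesis $\psi \in \mathcal{H}^2$, however, the second part of Proposition~\ref{prop:rhoLp} (Kato's result) provides exactly the missing ingredient: $\rho_\psi \in L^\infty(\mathbb{R}^3)$.

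Concretely, I would start from the pointwise Cauchy--Schwarz bound underlying Eq.~\eqref{eq:frhotau},
\begin{equation*}
|j_\psi^k(\r)|^2 \;\leq\; \rho_\psi(\r)\, \tau_\psi^k(\r),
\end{equation*}
and integrate, applying Hölder's inequality with the conjugate pair $(\infty, 1)$ to obtain
\begin{equation*}
\|j_\psi^k\|_2^2 \;\leq\; \|\rho_\psi\|_\infty\, \|\tau_\psi^k\|_1 \;=\; 2\, \|\rho_\psi\|_\infty\, \mathcal{K}^k(\psi).
\end{equation*}
Both factors on the right are finite: $\|\rho_\psi\|_\infty < +\infty$ by Proposition~\ref{prop:rhoLp}, and $\mathcal{K}^k(\psi) \leq \mathcal{K}(\psi) < +\infty$ because $\mathcal{H}^2 \subset \mathcal{H}^1$ implies finite kinetic energy. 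This yields $j_\psi^k \in L^2(\mathbb{R}^3)$ for each component $k = 1, 2, 3$, which is the claim $\jpvec_\psi \in \vec{L}^2$.

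There is no real obstacle here; the proposition is essentially the $p = 2$ endpoint of the family of estimates in Theorem~\ref{th:jLp}, unlocked by Kato's $L^\infty$ bound on $\rho_\psi$. The only small bookkeeping point is to confirm that $\mathcal{H}^2$ regularity is compatible with the standing assumptions (normalization and antisymmetry) inherited from $\mathcal{W}_N$, so that $\jpvec_\psi$ is defined via Eq.~\eqref{eq:jpvec-def} to begin with, which is immediate.
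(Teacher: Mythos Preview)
Your proof is correct and follows essentially the same route as the paper: invoke Kato's result (via Proposition~\ref{prop:rhoLp}) to get $\rho_\psi\in L^\infty$, then combine the pointwise Cauchy--Schwarz bound $|j_\psi^k|^2\leq \rho_\psi\,\tau_\psi^k$ with finiteness of the kinetic energy. The only cosmetic difference is that the paper works with the full $\tau_\psi$ and $\mathcal K(\psi)$ rather than the component-wise versions, which is immaterial.
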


\begin{proof}
	Suppose  $\psi \in\mathcal H^2$. By Proposition \ref{prop:rhoLp} it follows that $\rho_\psi\in L^\infty$. Then from Eq.~\eqref{eq:taurho}
	\begin{align*}
	\int_{\mathbb R^3} &\left\vert N \int_{\mathbb R^{3(N-1)}} \psi^*\nabla_{\r}\,\psi \d \r_2\dots \d \r_N \right\vert^2 \d \r \nonumber \\
	& \leq \int_{\mathbb R^3} \tau_\psi(\r) \rho_\psi(\r) \d \r \nonumber \\
	& \leq 	\Vert \rho_\psi \Vert_{\infty} \, 2\mathcal K(\psi)  <+\infty,
	\end{align*}
	which gives $\jpvec_\psi  \in \vec L^2$. (Note that we could have argued by means of the Cauchy--Schwarz inequality to obtain $\int_{\mathbb R^3}  \vert \jpvec \vert^2 \d \r \leq \Vert\rho\Vert_\infty \int_{\mathbb R^3} \vert \jpvec \vert^2\rho^{-1} \d \r$, where the integral over $\vert \jpvec \vert^2\rho^{-1}$ is bounded in terms of $\mathcal K(\psi)$.)
\end{proof}

\begin{proposition}\label{prop:jnotLp}
For $N=1$, there exists $\phi \in\mathcal W_N$ such that $\jpvec_\phi\notin \vec{L}^p$, for every $p>3/2$.
\end{proposition}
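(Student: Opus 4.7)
The plan is to exhibit an explicit $\phi\in\mathcal W_1$ of ``p-wave'' form, combining an $\ell=1,\,m=1$ angular factor with a radial amplitude that carries a logarithmically-regularized Sobolev-critical singularity at the origin. Specifically, in spherical coordinates, I would take
\[
\phi(\r) = c\,R(r)\sin\theta\,e^{\i\varphi},\qquad R(r)=\chi(r)\,r^{-1/2}\,(\log(2/r))^{-\gamma},
\]
where $\chi\in C_c^\infty([0,2))$ equals $1$ on $[0,1]$, $\gamma>1/2$ is a fixed parameter, and $c>0$ normalizes $\|\phi\|_2=1$. The heuristic is that $r^{-1/2}$ is the exact Sobolev borderline for $\mathcal H^1(\R^3)$ and the logarithmic factor tunes $\phi$ to sit just inside $\mathcal H^1$, while the azimuthal phase forces a $1/r$ singularity in the current.

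Verifying $\phi\in\mathcal W_1$ is then a direct calculation. The normalization reduces to $\int_0^\infty R(r)^2r^2\,dr$ with integrand $\sim r(\log 2/r)^{-2\gamma}$ near $r=0$, plainly finite. The kinetic energy decomposes in spherical coordinates into a radial contribution with dominant integrand $|R'(r)|^2r^2\sim r^{-1}(\log 2/r)^{-2\gamma}$ and an angular contribution scaling like $R(r)^2\sim r^{-1}(\log 2/r)^{-2\gamma}$; both integrate near $r=0$ precisely when $\gamma>1/2$. For the current, using $\nabla\varphi=\hat\varphi/(r\sin\theta)$ one gets
\[
\jpvec_\phi = |\phi|^2\,\nabla\varphi = \frac{c^2 R(r)^2\sin\theta}{r}\,\hat\varphi,
\]
so that
\[
\|\jpvec_\phi\|_p^p = 2\pi\!\left(\int_0^\pi\sin^{p+1}\theta\,d\theta\right)\int_0^\infty c^{2p}R(r)^{2p}\,r^{2-p}\,dr.
\]
Near the origin the radial integrand behaves like $r^{2-2p}(\log 2/r)^{-2\gamma p}$. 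For every $p>3/2$ the exponent $2-2p<-1$ forces divergence at $r=0$ no matter how large $\gamma$ is, giving $\jpvec_\phi\notin\vec L^p$. (At the endpoint $p=3/2$ the exponent is exactly $-1$ and with $\gamma>1/3$ the logarithm renders it convergent, consistent with Theorem~\ref{th:jLp}.)

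The main subtlety I anticipate is ensuring that the gradient computed pointwise on $\R^3\setminus\{0\}$ actually coincides with the distributional gradient on all of $\R^3$, so that $\phi$ genuinely lies in $\mathcal H^1(\R^3)$ and not merely in $\mathcal H^1_{\mathrm{loc}}(\R^3\setminus\{0\})\cap L^2(\R^3)$. Since an isolated point has vanishing $W^{1,2}$-capacity in $\R^3$, a standard annular truncation $\eta_\epsilon$ vanishing on $|\r|\le\epsilon$ handles it: one has $\phi\eta_\epsilon\to\phi$ in $\mathcal H^1$ because $|\phi||\nabla\eta_\epsilon|\lesssim\epsilon^{-3/2}$ on an annulus of volume $\sim\epsilon^3$, so $\int|\phi||\nabla\eta_\epsilon|\,d\r\to 0$ and no distributional contribution at the origin appears.
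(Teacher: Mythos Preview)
Your construction is correct and takes a genuinely different route from the paper. The paper uses a purely radial ansatz $\phi(\r)=r^{\alpha/2}e^{\i r^\beta}$ on the unit ball, so the current $\jpvec_\phi=\beta r^{\alpha+\beta-1}\hat r$ is radial; choosing $(\alpha,\beta)=(-1+3\delta,-\delta)$ gives $\jpvec_\phi\notin\vec L^p$ for $p>3/(2(1-\delta))$. Thus each member of the paper's family misses $\vec L^p$ only for $p$ above a threshold strictly larger than $3/2$, and the proof as written delivers the quantifier order ``for every $p>3/2$ there exists $\phi$'' rather than a single witness. Your approach instead puts the phase in the angular variable and pushes the radial amplitude to the Sobolev-critical exponent $r^{-1/2}$ with a logarithmic correction; the resulting current has a genuine $r^{-2}$-type singularity (modulo logs) in $|\jpvec_\phi|$, so $\|\jpvec_\phi\|_p^p$ diverges for \emph{all} $p>3/2$ from a single $\phi$. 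This is the literal $\exists\phi\,\forall p$ reading of the proposition and is strictly stronger than what the paper's argument yields; it also transparently exhibits the endpoint behaviour at $p=3/2$. The paper's approach buys simplicity (pure power laws, no cutoff, no logarithms), at the cost of needing a family of examples. Your care with the distributional gradient at the origin is appropriate and the capacity argument you sketch is the standard and correct way to handle it; the paper does not address this point explicitly, though its functions are slightly less singular so the issue is milder there.
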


\begin{proof}
	Consider the function 
	\[
	\phi(\r) = r^{\alpha/2} \mathrm{e}^{\i \xi(r)},\quad \xi(r) = r^\beta,\quad r= \vert \r \vert,
	\]
    where $0\leq r \leq 1$, $\alpha > -1$, and $2\beta >-\alpha -1$. It then holds that $\phi \in \mathcal H^1(\mathbb R^3)$, since
	\begin{align*}
	\int_0^1 |\phi|^2 r^2 \d r &= \int_0^1 r^{\alpha+2} \d r =\left[\frac{r^{\alpha + 3}}{\alpha + 3}\right]_0^1 = \frac 1 {\alpha +3},\\ 
	\int_0^1 |\nabla\phi|^2 r^2 \d r &= \int_0^1 |\nabla (r^{\alpha/2})|^2 r^2 \d r + \int_0^1 |\nabla \xi|^2 r^{\alpha + 2} \d r  \\
	& = \frac \alpha 4 ^2 \int_0^1 r^{\alpha} \d r  + \beta^2 \int_0^1 r^{\alpha + 2\beta} \d r \\
	&=\frac{\alpha^2}{4(\alpha+1)} + \frac{\beta^2}{\alpha + 2\beta+ 1}. 
	\end{align*}
    We wish to show that $\jpvec_\phi \notin  \vec L^p$, for all $p>3/2$ for some choice of $(\alpha,\beta)$ in the set 
	\[\Theta =\{ \alpha>-1,\beta>-(\alpha + 1)/2  \}.
	\]
	Note that $\jpvec_\phi = \text{Im}\,\phi^*\nabla\phi = r^\alpha \nabla r^\beta$. Let $\delta$ be an arbitrarily small positive number and set $\alpha= -1+3\delta$. Then with $\beta = -\delta>-3\delta/2$ we have $(\alpha,\beta)\in\Theta$. 
	For all $p> 3/(2(1-\delta))$ we obtain
	\begin{align*}
	\int_0^\infty \vert \jpvec_\phi \vert^p r^2 \d r &\geq \delta^p \int_0^1 r^{2p(\delta-1)+2 } \d r \\
	&= \delta^p \lim_{\eps\rightarrow 0 + } \left[ \frac{r^{2p(\delta-1)+ 3}}{2p(\delta-1)+ 3}\right]_\eps^1 =+\infty.
	\end{align*}
Since any $\delta>0$ is allowed, we conclude $\jpvec_\phi \notin \vec L^p$, $p>3/2$.
\end{proof}

\begin{remark}\label{remark-rho-Lbigger3}
Note that the same counterexample also shows that $\rho_\phi \notin L^p$, for $p>3$.
\end{remark}

From $\psi \in \mathcal W_N$, with Proposition~\ref{prop:rhoLp} and Remark~\ref{remark-rho-Lbigger3} we have thus arrived at the well-known optimal choice $L^1\cap L^3$ of $L^p$ spaces for the particle density~\cite{Lieb83}. Similarly, by Theorem~\ref{th:jLp} and Proposition~\ref{prop:jnotLp} the optimal choice for the paramagnetic current density is $\vec L^1 \cap \vec L^{3/2}$. Note that densities and currents that are not from these spaces cannot be represented by admissible wavefunctions $\psi \in \mathcal W_N$. Later this choice will be further limited by the demands coming from \emph{compatibility} (see Sec.~\ref{sec:compatibility}), reflexivity, and strict convexity (in connection with the regularized KS iteration scheme, Sec.~\ref{sec:MYKSODA}). \newline
\indent We summarize this section with some definitions and a concluding corollary. We also refer to Refs.~\citenum{ENGLISCH1983,Lieb83,LiebSchrader,TellgrenNrep,Laestadius2014b,WagnerFollowUp} for further discussions on this topic (not only confined to CDFT). 
\begin{definition}
    We say that a density pair $(\rho,\jpvec)$ is \emph{$N$-representable} if there exists a $\psi\in \mathcal W_N$ such that $\rho_\psi = \rho$ and $\jpvec_\psi= \jpvec$. If such a $\psi$ is the ground state of some $H^\lambda(v,\AA)$, then $(\rho,\jpvec)$ is ($\lambda$) \emph{$v$-representable}. Furthermore, we distinguish between \emph{fully interacting} ($\lambda=1$) and \emph{non-interacting} ($\lambda=0$) $v$-representability. If $\psi\in \mathcal W_N$ is replaced by $\gamma\in \mathcal D_N$ we call the above property \emph{ensemble} $v$-\emph{representability}.
\end{definition}

\begin{corollary}\label{cor:Nrepdensities}
The set of $N$-representable density pairs $(\rho,\jpvec)$ is a subset of
$(L^1\cap L^3) \times (\vec L^1 \cap \vec L^{3/2})$.
\end{corollary}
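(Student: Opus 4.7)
The proof is essentially a bookkeeping exercise that assembles the $L^p$-characterizations proved earlier in the section. The plan is to take an arbitrary $N$-representable pair $(\rho,\jpvec)$, unpack the definition to obtain a witnessing wavefunction $\psi\in\mathcal W_N$ with $\rho_\psi=\rho$ and $\jpvec_\psi=\jpvec$, and then invoke the two main structural results already established.

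First I would handle the particle density. By definition of $\mathcal W_N$, the wavefunction has finite kinetic energy, so Proposition~\ref{prop:rhoLp} applies and gives $\rho\in L^p$ for every $p\in[1,3]$; in particular $\rho\in L^1\cap L^3$. (The $L^1$ membership with $\|\rho\|_1=N$ is of course also built into the definition of $\mathcal I_N$ via the normalization of $\psi$.) Next I would turn to the paramagnetic current density and apply Theorem~\ref{th:jLp} directly: each component $j_\psi^k$ lies in $L^p$ for every $p\in[1,3/2]$, with explicit bounds at the endpoints $p=1$ and $p=3/2$ in terms of $\mathcal K^k(\psi)$. In particular $\jpvec\in \vec L^1\cap \vec L^{3/2}$.

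Combining these two conclusions yields $(\rho,\jpvec)\in (L^1\cap L^3)\times(\vec L^1\cap \vec L^{3/2})$, which is the desired inclusion. Since the argument is nothing more than a conjunction of Proposition~\ref{prop:rhoLp} and Theorem~\ref{th:jLp}, there is no genuine obstacle; the only point worth being explicit about is that the corollary merely asserts an inclusion and not an equality, so no converse (i.e.\ constructing a representing $\psi$ from a given pair) is needed. The substantive work--namely the Hoffmann-Ostenhof/Lieb bound \eqref{eq:kin-bound}, the Sobolev inequality \eqref{Sob}, and the Cauchy--Schwarz/Hölder argument of Lemma~\ref{lem:fLp}--has already been carried out, and Propositions~\ref{prop:jnotLp} and Remark~\ref{remark-rho-Lbigger3} confirm that this $L^p$-range is optimal, so that the stated inclusion is the sharpest possible consequence of $\psi\in\mathcal W_N$.
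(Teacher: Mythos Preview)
Your proposal is correct and matches the paper's approach exactly: the corollary is stated without proof in the paper precisely because it is an immediate conjunction of Proposition~\ref{prop:rhoLp} and Theorem~\ref{th:jLp} applied to any representing $\psi\in\mathcal W_N$. Your additional remarks on optimality (via Proposition~\ref{prop:jnotLp} and Remark~\ref{remark-rho-Lbigger3}) are also in line with the discussion surrounding the corollary.
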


The above set is given in terms of $L^p$ spaces only. There are other well established constraints such as $\rho(\r)\geq 0$, $\int_{\R^3} \rho \d\r = N$, and $\int_{\R^3} \vert \jpvec \vert^2 \rho^{-1}\d\r < +\infty$~\cite{Lieb83,LiebSchrader,Laestadius2014}.

It will later be important to impose a reflexive (R) and strictly convex Banach space setting, and we therefore define such a space that includes the set of $N$-representable density pairs. Hanner's inequality~\cite{hanner1956uniform} shows that $L^p$, $1<p<\infty$, is even a uniformly convex space which implies both strict convexity and reflexivity.
\begin{definition} \label{def:XRYR}
     %For $1< s \leq 3$, let
      %  $$X_s \times Y_s = (L^{s} \cap L^3) \times (L^{\frac {2s}{s+1}} \cap  \vec L^{3/2}).$$
%In particular, we set 
$X_\mathrm{R} \times Y_\mathrm{R} = L^3 \times  \vec L^{3/2}$.
\end{definition}

\subsection{Constrained-search functionals}
\label{sec:functionals}

To formulate a rigorous CDFT, several requirements need to be placed on densities and potentials. Some of these requirements are related to $N$-representability and thus do not amount to any restriction, but merely exclude irrelevant densities that are invalid in the sense that they cannot arise from any quantum-mechanical state $\psi\in\mathcal W_N$. Other requirements amount to assumptions about the ground-state densities or restrictions on the external potentials that can be considered.

The universal part of the Hamiltonian $H^\lambda(v,\AA)$ in Eq.~\eqref{eq:Hdef}, independent of the potential pair $(v,\AA)$, is $H^\lambda(0,\vec 0) = T(\vec 0) + \lambda W$. On a Banach space $\DensSpace \times \CurSpace \subset \DensSpace_\mathrm{R} \times \CurSpace_\mathrm{R}$ of measurable functions
for particle densities ($X$) and current densities ($Y$), define the universal Levy--Lieb-type functionals $F_\mathrm{VR}^\lambda$ and $F_\mathrm{VR,DM}^\lambda$ according to
	\begin{align*}
	F_\mathrm{VR}^\lambda(\rho,\jpvec) &= \inf_{\psi\in \mathcal W_N} \left\{\langle \psi, H^\lambda(0,\vec 0) \psi\rangle : \psi \mapsto (\rho,\jpvec) \right\}, \\
	F_\mathrm{VR,DM}^\lambda(\rho,\jpvec) &= \inf_{\gamma\in \mathcal D_N} \left\{\trace (\gamma H^\lambda(0,\vec 0)) : \gamma \mapsto (\rho,\jpvec) \right\}.
	\end{align*}
These functionals are derived from the parts of the energy expressions in Eq.~\eqref{eq:E-functional} that are independent of the potential pair $(v,\AA)$. In analogy with Eq.~\eqref{eq:E-functional} we generally have two possibilities, searching either over pure or mixed states. 
If a given density pair $(\rho,\jpvec)$ cannot be represented by a pure or mixed state, then the value of the functional will just be set to $+\infty$ by definition. 
Unlike the ground-state energy functional in Eq.~\eqref{eq:E-functional}, the pure and mixed search domains do not yield equivalent results, since the former is subject to more severe representability restrictions~\cite{TellgrenNrep}. 
	
\begin{remark} The density functionals are here denoted ``VR'' which stands for Vignale and Rasolt to credit their work~\cite{Vignale1987}. We remark that we could just as well have chosen to credit Levy, Valone, and Lieb due to the obvious counterpart in DFT~\cite{Levy79,Lieb83,Valone80}. ``DM'' refers to the use of density matrices for mixed states and was introduced for the DFT constrained-search functional by Valone in Ref.~\citenum{Valone80}.
\end{remark}
\begin{remark}
At this point we wish to keep the setting general and thus the space $X\times Y$ is not specified any closer. This setting includes the choice $X = L^1\cap L^3$, $Y= L^1\cap L^{3/2}$ such that all information from the previous section on function spaces is used. However, if a regularized theory is to be obtained, we need to have a reflexive Banach space setting and therefore we also consider the less restrictive choice $X =X_\mathrm{R}$, $Y= Y_\mathrm{R}$. 
\end{remark}

The Levy--Lieb-type functional $F_\mathrm{VR}(\rho,\jpvec)$ is not convex, see Proposition~8 in Ref.~\citenum{Laestadius2014}. 
Yet by the linearity of the map $\gamma \mapsto (\rho,\jpvec)$ it follows that $F_\mathrm{VR,DM}(\rho,\jpvec)$ \emph{is} convex.
Both functionals are \emph{admissible}~\cite{KH2015} in the sense that they can be used to compute the ground-state energy. 

Since the energy expression will naturally include integrals over couplings of potentials with densities, it is helpful to introduce the notion of dual pairings (between elements of dual Banach spaces). 
For measurable functions $f,g$ with domain $\R^3$ let
\[
\langle f,g\rangle = \int_{\mathbb R^3} f(\r) g(\r) \d \r,
\]
whenever the integral is well-defined in $\mathbb{R}\cup\{\pm\infty\}$,
and similarly for vector-valued functions $\vec f,\vec g$, but with the pointwise product replaced by $\vec f \cdot \vec g$.
Then the energy expressions in Eq.~\eqref{eq:E-functional} can be written as
\begin{align*}
E^\lambda(v,\AA) = \inf_{ \substack{ (\rho,\jpvec)  \in X\times Y } }
\Big\{  F_\mathrm{VR}^\lambda&(\rho,\jpvec) + \langle \AA, \jpvec \rangle \\ 
& + \langle v+\onehalf \vert \AA \vert^2, \rho \rangle \Big\},
\end{align*}
and equivalently by employing $F^\lambda_\mathrm{VR,DM}$ defined with mixed states instead of $F_\mathrm{VR}^\lambda$. 
In particular, $\lambda=1$ corresponds to the fully interacting system, and $\lambda=0$ to a non-interacting one. \newline
\indent At the outset, the formulation of paramagnetic 
CDFT relies on a decomposition of the total
kinetic energy into canonical kinetic energy, the paramagnetic term
$ \langle \AA,\jpvec \rangle$,
and the diamagnetic term $\langle  |\AA|^2, \rho \rangle$, with each of the
terms separately finite~\cite{Tellgren2012,LaestadiusBenedicks2014,SK-TH-preprint}. As in standard
DFT, the electrostatic interaction with the
external potential, $\langle v,\rho\rangle$, needs to be finite
too. In the convexified form, a new potential variable is formed by
absorbing the diamagnetic term into the scalar potential $u = v +
\frac 1 2 |\AA|^2$~\cite{Tellgren2012,Laestadius2014}. Minimally, then, the underlying function spaces should be such
that
\begin{align*}
|\langle u,\rho \rangle| < +\infty \quad \text{and} \quad |\langle \AA,\jpvec \rangle| < +\infty.
\end{align*}
A convex formulation achieves this automatically as it requires the
stronger condition that densities and potentials are elements of dual Banach spaces:

\begin{definition}[Density-potential duality]\label{def:dual}
	We say that there is \emph{density-potential duality}, or just
	\emph{duality}, when densities and potentials are confined to dual
	Banach spaces
	\begin{align*}
	\text{(D1)} & \quad \rho \in \DensSpace \quad \text{and} \quad u \in \DensSpace^*, \\
	\text{(D2)} & \quad \jpvec \in \CurSpace \quad \text{and} \quad \AA \in \CurSpace^*.
	\end{align*}
\end{definition}

\begin{remark}
At this moment we do not assume any more specific properties for $X\subset X_\mathrm{R}$ and $Y \subset Y_\mathrm{R}$ besides duality. However, reflexivity and strict convexity of $\DensSpace$ and $\CurSpace$ are
additional assumptions that will become important in Sec.~\ref{sec:IIIks}.
\end{remark}
\begin{remark} Note that $u\in X^*$, $X= L^1\cap L^3$ and $\AA\in Y^*$, $Y=L^1\cap L^{3/2}$ imply $u \in L^{3/2} + L^\infty$ and $\AA \in \vec L^3 + \vec L^\infty$. For the condition on the original scalar potential $v$ see the next section on \emph{compatibility}. As far as the vector potential is concerned, the restrictions on $\AA$ are stronger than the familiar setting of $\AA \in L_\mathrm{loc}^2$ (see e.g., Ref.~\citenum{LiebLoss}), which is implied by $\AA \in \vec L^3 + L^\infty$. Also the reflexive setting with $X= X_\mathrm{R}$ and $Y= Y_\mathrm{R}$, where $u\in L^{3/2}$ and $\AA\in \vec L^3$, implies $\AA \in L_\mathrm{loc}^2$ again. Moreover, $u\in L_\mathrm{loc}^{3/2}$ is a natural assumption~\cite{LiebLoss}. We remark that our consideration of dual spaces is mathematically motivated and not a physical necessity. A truncated space domain is in many cases needed to cover the usual potentials of physical systems (see Ref.~\citenum{Kvaal2014} for a discussion on this topic).
\end{remark}

Since the potentials $(v,\AA)$ are \emph{not} paired linearly with the densities $(\rho,\jpvec)$, the functional $E^\lambda$ defined in this way is \emph{not} concave. The change of variables $u = v + \tfrac{1}{2} \vert\AA \vert^2$ results in a convexification of paramagnetic CDFT, meaning that
\begin{equation}\label{eq:E-transform}
\bar{E}^{\lambda}(u,\AA) = E^\lambda(u-\onehalf\vert \AA \vert^2, \AA)
\end{equation}
is a \emph{jointly concave} functional~\cite{Tellgren2012}. The consequences of this variable change for the choice of function spaces will be discussed in Sec.~\ref{sec:compatibility}. The price to pay for concavity is a convoluted gauge symmetry. For all scalar fields $\chi$ with gradients in the same function space as $\AA$ one has
\begin{align*}
\bar{E}^{\lambda}(u,\AA) &= E^{\lambda}(u-\onehalf \vert \AA \vert^2,\AA) \\
&= E^{\lambda}(u - \tfrac{1}{2} \vert \AA \vert^2,\AA+\nabla\chi)  \\
&= \bar{E}^{\lambda}(u+\AA\cdot\nabla\chi+\onehalf |\nabla\chi|^2,\AA+\nabla\chi).
\end{align*}
But the benefit is much greater, making $\bar{E}^{\lambda}$ jointly concave in both potentials and highlighting the linear structure of coupling between potentials and densities 
\begin{align*}
\bar E^\lambda(u,\AA) &= \inf_{  (\rho,\jpvec) \in X\times Y  } 
\left\{  F_\mathrm{VR}^\lambda(\rho,\jpvec) +  \langle \AA, \jpvec \rangle + \langle u, \rho \rangle \right\}.
\end{align*}

The convex formulation of paramagnetic CDFT can be outlined as follows. Let the dual space of $X\times Y$ be given by $ \DensSpace^* \times \CurSpace^*$. 
We define the generalized Lieb functional $F^\lambda(\rho,\jpvec)$ as the supremum over the energy plus the linear coupling between densities and potentials, i.e., 
\begin{equation}\label{eq:F-LF}
F^\lambda(\rho,\jpvec) = \sup_{ (u,\AA)\in X^*\times Y^* } \left\{\bar E^\lambda(u,\AA)  - \langle \AA,\jpvec \rangle  - \langle u,\rho\rangle \right\}.
\end{equation}
Such a functional is by construction convex. To extract more from Eq.~\eqref{eq:F-LF} we first need

\begin{definition}\label{def:defs}
Let $B$ be a Banach space with dual $B^*$, $f:B\to \R \cup \{\pm \infty\}$, and $g: B^*\to \R \cup \{\pm \infty\}$.

\begin{itemize}
\item[(i)] If $f$ is convex, lower semi-continuous, has $f > -\infty$, and is not identically equal to $+\infty$, then it is called \emph{closed convex} and we write $f\in \Gamma_0(B)$. Analogously, with weak-* lower semi-continuity we define $\Gamma_0^*(B^*)$. We also introduce the sets $\Gamma(B) = \Gamma_0(B) \cup \{\pm \infty\}$ and $\Gamma^*(B^*) = \Gamma^*_0(B^*) \cup \{\pm \infty\}$.

\item[(ii)] Following Refs.~\citenum{Lieb83,Kvaal2014}, we define the (skew) conjugate functionals (Legendre--Fenchel transformations),
\begin{align*}
    f^\wedge(b^*) &= \inf_{b\in B} \{ f(b) + \langle b^*,b\rangle  \} \in -\Gamma^*(B^*), \\
    g^\vee(b) &= \sup_{b^*\in B^*} \{ g(b^*) - \langle b^*,b\rangle  \} \in \Gamma(B).
\end{align*}
\end{itemize}

\end{definition}

\begin{definition}
\label{def:norms-union-plus}
The standard norms for the intersection of two Banach spaces $B,B'$ and their set sum are given by the following expressions~\cite{liu1969sums}
\begin{align*}
\|b\|_{B \cap B'} &= \max\{\|b\|_B, \|b\|_{B'}\}, \\
\|b\|_{B + B'} &= \inf\{\|b\|_B + \|b'\|_{B'} : b \in B, b' \in B', \\
& \qquad \qquad  b = b + b' \}.
\end{align*}
\end{definition}

Theorem~3.6 in Lieb~\cite{Lieb83} can be straightforwardly generalized to the statement that $F^\lambda$ is lower semi-continuous on the space $X\times Y = (L^1\cap L^3)\times (\vec L^1 \cap \vec L^{3/2})$, with topology defined by the norm from the above definition. (See also Proposition~12 in Ref.~\citenum{Laestadius2014} where this was done for $(L^1\cap L^3)\times \vec L^1$.)
%In particular, we have 
Also, by the same argument, we have for the reflexive setting

\begin{lemma}\label{lemma:Gamma}
$F^\lambda \in \Gamma_0(X_\mathrm{R}\times  Y_\mathrm{R})$.
\end{lemma}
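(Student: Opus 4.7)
The goal is to verify the four conditions that define $\Gamma_0(X_\mathrm{R} \times Y_\mathrm{R})$: convexity, lower semi-continuity in the norm topology of $L^3 \times \vec L^{3/2}$, $F^\lambda > -\infty$ everywhere, and $F^\lambda \not\equiv +\infty$. The starting point is the defining formula \eqref{eq:F-LF}, which exhibits $F^\lambda$ as a pointwise supremum over $(u,\AA) \in X_\mathrm{R}^* \times Y_\mathrm{R}^* = L^{3/2} \times \vec L^3$ of the maps
\[
(\rho,\jpvec) \mapsto \bar E^\lambda(u,\AA) - \langle u,\rho\rangle - \langle \AA, \jpvec\rangle,
\]
each of which is affine in $(\rho,\jpvec)$, and \emph{norm-continuous} because the dual pairings are continuous on $X_\mathrm{R} \times X_\mathrm{R}^*$ and $Y_\mathrm{R} \times Y_\mathrm{R}^*$ respectively. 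A pointwise supremum of continuous affine functionals is automatically convex and norm-lower semi-continuous, which disposes of the first two items in one stroke.

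For $F^\lambda > -\infty$, I would insert the particular choice $(u,\AA)=(0,\vec 0)$ into the supremum. This gives
\[
F^\lambda(\rho,\jpvec) \;\geq\; \bar E^\lambda(0,\vec 0) \;=\; E^\lambda(0,\vec 0) \;=\; \inf_{\psi \in \mathcal W_N} \langle \psi, (T(\vec 0)+\lambda W)\psi\rangle \;\geq\; 0,
\]
since $T(\vec 0)$ is nonnegative and $W$ is nonnegative (and $\lambda \geq 0$). Hence $F^\lambda$ is uniformly bounded below by $0$, independently of $(\rho,\jpvec)$.

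To establish $F^\lambda \not\equiv +\infty$, the natural approach is to exhibit a pair $(\rho,\jpvec) \in X_\mathrm{R} \times Y_\mathrm{R}$ on which $F^\lambda$ is finite. Compare the Legendre--Fenchel definition \eqref{eq:F-LF} with the convexified energy expression for $\bar E^\lambda$; for any $N$-representable $(\rho,\jpvec)$ one has $\bar E^\lambda(u,\AA) \leq F_\mathrm{VR}^\lambda(\rho,\jpvec) + \langle u,\rho\rangle + \langle \AA,\jpvec\rangle$ by testing the infimum with that density pair. Rearranging and taking the supremum over $(u,\AA)$ yields the inequality $F^\lambda(\rho,\jpvec) \leq F_\mathrm{VR}^\lambda(\rho,\jpvec)$. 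Picking any smooth, rapidly decaying test wavefunction $\psi \in \mathcal W_N$ (for instance a Slater determinant of Gaussians), both $\mathcal K(\psi)$ and $\langle \psi, W\psi\rangle$ are finite, so $F_\mathrm{VR}^\lambda(\rho_\psi,\jpvec_\psi) < +\infty$; moreover Corollary~\ref{cor:Nrepdensities} guarantees $(\rho_\psi,\jpvec_\psi) \in (L^1\cap L^3)\times(\vec L^1 \cap \vec L^{3/2}) \subset X_\mathrm{R} \times Y_\mathrm{R}$.

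The main potential pitfall is ensuring that the continuity of the dual pairings used in step~1--2 is exactly matched to the reflexive spaces in Definition~\ref{def:XRYR}: one must check that $X_\mathrm{R}^* = L^{3/2}$ and $Y_\mathrm{R}^* = \vec L^3$ are indeed the spaces over which the supremum in \eqref{eq:F-LF} ranges. Once that bookkeeping is in place, everything else is a direct application of standard properties of conjugate functionals (Definition~\ref{def:defs}(ii)), with the nonnegativity of $H^\lambda(0,\vec 0)$ providing the crucial uniform lower bound.
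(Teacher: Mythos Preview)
Your argument is correct. The paper does not give an explicit proof of this lemma: it merely states that the result follows ``by the same argument'' that generalizes Theorem~3.6 in Lieb~\cite{Lieb83} (lower semi-continuity of the density-matrix constrained-search functional via compactness of density matrices). Your route is different and more direct: since $F^\lambda$ is \emph{defined} in Eq.~\eqref{eq:F-LF} as a Legendre--Fenchel conjugate, convexity and norm-lower semi-continuity are automatic from the supremum-of-affine-continuous structure, and you only need to rule out the degenerate cases $F^\lambda \equiv +\infty$ and $F^\lambda = -\infty$ somewhere. Your treatment of these via $(u,\AA)=(0,\vec 0)$ and the bound $F^\lambda \leq F_\mathrm{VR}^\lambda$ is clean and self-contained. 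The paper's implicit route via Lieb's compactness argument would instead establish lower semi-continuity of the constrained-search functional directly, which gives more (in particular, information toward whether $F_\mathrm{VR,DM}^\lambda = F^\lambda$), but for the bare statement $F^\lambda \in \Gamma_0$ your conjugate-based argument is the natural one and requires no extra machinery.
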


Since $F^\lambda$ is convex \emph{and} lower semi-continuous, this clears the way for the application of powerful tools from convex analysis. Because Eq.~\eqref{eq:F-LF} is already the Legendre--Fenchel transformation of the energy functional $\bar E^\lambda$ we are able to switch back from $F^\lambda$ to $\bar E^\lambda$ with the inverse transformation (see Theorem~1 in Ref.~\citenum{KSpaper2018})
\begin{equation}\label{eq:E-LF}
\bar{E}^{\lambda}(u,\AA) = \inf_{ (\rho,\jpvec) \in X\times Y  } \left\{ F^{\lambda}(\rho,\jpvec) + \langle \AA, \jpvec \rangle + \langle u,\rho\rangle \right\}.
\end{equation}
Using the notation from Definition~\ref{def:defs}~(ii), we sum up the situation as
\begin{align*}
    \bar E^\lambda = \left(F_\mathrm{VR}^\lambda\right)^\wedge = \left(F_\mathrm{VR,DM}^\lambda\right)^\wedge = \left(F^\lambda \right)^\wedge, 
    \quad F^\lambda = \left( \bar E^\lambda\right)^\vee. 
\end{align*}
Moreover the closed convex $F^\lambda$ is the smallest possible admissible functional, 
\[ F^\lambda \leq F^\lambda_{\text{VR,DM}} \leq F^\lambda_{\text{VR}}. \]
Solving the variational problem in Eq.~\eqref{eq:E-LF} is the general task of CDFT.

\begin{remark}
It is to the best of our knowledge an open question whether $F_\mathrm{VR,DM}^\lambda$ is lower semi-continuous and hence equal to $F^\lambda$ in the context of paramagnetic CDFT.
\end{remark}

\subsection{Compatibility of function spaces}
\label{sec:compatibility}

Duality as in Definition~\ref{def:dual} is not strong enough to guarantee finiteness of the diamagnetic
term. It also does not guarantee another
natural condition on the function space for the diamagnetic contribution to the current density that we call \emph{compatibility}.

\begin{definition}
	\label{defCompatibility}
	The density function space $\DensSpace$ and the current density function space
	$\CurSpace$ are said to be \emph{compatible} if, for all $\rho \in \DensSpace$ and all
	$\AA \in \CurSpace^*$,
	\begin{equation*}
	\text{(C1)} \quad |\AA|^2 \in \DensSpace^* \quad \text{and} \quad \text{(C2)} \quad \rho \AA \in \CurSpace.
	\end{equation*}
\end{definition}

We emphasize both conditions in the definition above, as they have different physical interpretations, although it will be seen in Theorem \ref{thmCompatilityPartEquiv} that (C1) and (C2) are equivalent.
The first condition (C1) requires that the
scalar potential $v$ and $\vert \AA\vert^2$ share the same function space, so that
changes of variables between $v = u -\frac 1 2 |\AA|^2$ and $u = v + \frac 1 2 |\AA|^2$ stay
within the space $\DensSpace^*$. The second condition (C2) requires that the paramagnetic
contribution, $\jpvec$, and the diamagnetic contribution, $\rho \AA$,
to the total current density share the same function space $\CurSpace$. \newline
\indent Compatibility also ensures a sensible behavior under gauge transformations. Duality imposes a restriction on the gauge transformations that are allowed within the theory. Any gauge function $\chi$ that is used to transform $\AA$ to $\AA' = \AA + \nabla\chi$ must satisfy $\nabla\chi \in \CurSpace^*$. If $(v,\AA)$ has the ground-state density pair $(\rho,\jpvec)$, the gauge transformed potential pair $(v,\AA')$ would be expected to have the ground-state density pair $(\rho,\jpvec' = \jpvec + \rho\nabla\chi)$. The second compatibility condition (C2) ensures that $\jpvec' \in \CurSpace$, so that ground-state density pairs are never lost after an allowed gauge transformation.

The following theorem shows that the two compatibility
conditions are in fact equivalent. First, we mention a fundamental result that will be used in the sequel. Suppose we are given a general Banach space $B$ of measurable functions and a measurable function $g$. We can then check that $g$ is contained in the dual $B^*$ by verifying that the pairing $\langle g,f\rangle = \int\! g \, f \d\r$ is finite for all $f \in B$, see Appendix~\ref{app:functional} for a full proof of this statement.

\begin{theorem} \label{thmCompatilityPartEquiv}
    If $X,Y$ and their duals are Banach spaces of measurable functions, then (C1) $\iff$ (C2).
\end{theorem}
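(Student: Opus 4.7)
The plan is to reduce both directions to the auxiliary fact cited just before the theorem and proved in Appendix~\ref{app:functional}: a measurable function $g$ belongs to the dual Banach space $B^*$ precisely when the integral $\int g f \,\mathrm{d}\r$ is finite for every $f\in B$. Granted this characterization, the only non-trivial analytic ingredient is the elementary pointwise bound
\begin{equation*}
2|\AA(\r)\cdot \vec{B}(\r)| \leq |\AA(\r)|^2 + |\vec{B}(\r)|^2,
\end{equation*}
obtained from $|\AA\pm\vec{B}|^2\geq 0$. Everything else is bookkeeping on dual pairings.

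For the implication (C1)~$\Rightarrow$~(C2), I fix $\rho\in \DensSpace$ and $\AA\in \CurSpace^*$ and verify, for an arbitrary $\vec{B}\in \CurSpace^*$, that the pairing $\int \rho\AA\cdot\vec{B}\,\mathrm{d}\r$ is finite. Multiplying the pointwise bound by $|\rho|$ and integrating yields
\begin{equation*}
\left|\int \rho\AA\cdot\vec{B}\,\mathrm{d}\r\right| \leq \tfrac{1}{2}\langle |\AA|^2, |\rho|\rangle + \tfrac{1}{2}\langle |\vec{B}|^2, |\rho|\rangle.
\end{equation*}
Under (C1) both $|\AA|^2$ and $|\vec{B}|^2$ lie in $\DensSpace^*$ (the second by applying (C1) to $\vec{B}\in \CurSpace^*$), so each pairing on the right is finite, tacitly using that $|\rho|\in \DensSpace$, which is automatic for the concrete lattice spaces at hand (subspaces of $L^3$). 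The characterization then identifies $\rho\AA$ as an element of $\CurSpace$, i.e., (C2).

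The converse (C2)~$\Rightarrow$~(C1) is essentially tautological once the characterization is available. Fix $\AA\in \CurSpace^*$. For any $\rho\in \DensSpace$, (C2) gives $\rho\AA\in \CurSpace$, which then pairs finitely with $\AA\in \CurSpace^*$:
\begin{equation*}
\int \rho|\AA|^2\,\mathrm{d}\r = \langle \rho\AA, \AA\rangle_{\CurSpace,\CurSpace^*} < \infty.
\end{equation*}
Since this holds for every $\rho\in \DensSpace$, the characterization places $|\AA|^2$ in $\DensSpace^*$, which is (C1).

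The main obstacle is not the manipulations above but the dual-space criterion invoked from Appendix~\ref{app:functional}: upgrading \emph{``}$\int g f$ finite for each $f\in B$\emph{''} to \emph{``}$g\in B^*$\emph{''} requires an application of the closed-graph (or uniform-boundedness) theorem on the measurable-function Banach space $B$, and has to be carried out with care because only the almost-everywhere structure of these spaces has been specified. Once that lemma is granted, the theorem itself reduces to the single algebraic identity $2\AA\cdot\vec{B}=|\AA|^2+|\vec{B}|^2-|\AA-\vec{B}|^2$ in the forward direction and the tautological test pairing $\vec{B}=\AA$ in the backward direction.
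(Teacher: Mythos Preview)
Your proof is correct and follows essentially the same approach as the paper: both directions hinge on the pointwise inequality $2|\AA\cdot\vec{B}|\leq |\AA|^2+|\vec{B}|^2$ together with the dual-space characterization from Appendix~\ref{app:functional}, and (C2)$\Rightarrow$(C1) is obtained in both cases by the test choice $\vec{B}=\AA$. The only cosmetic difference is that you argue (C1)$\Rightarrow$(C2) directly while the paper phrases the same step as a contrapositive; you are also slightly more explicit than the paper in flagging the tacit hypothesis $|\rho|\in\DensSpace$, which the paper uses without comment.
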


\begin{proof}
	Part 1 ($\impliedby$): From (C2), we have that $$\int_{\R^3} |\rho \AA\cdot\AA'| \d\r <+\infty$$ for all $\rho \in \DensSpace$ and all
	$\AA, \AA' \in \CurSpace^*$. Specialization to the case $\AA' = \AA$
	yields (C1). \\
	\indent Part 2 ($\implies$): Suppose (C2) is false, i.e., $\rho \AA \notin
	\CurSpace$. Then there exists an $\AA' \in \CurSpace^*$ such that
	\begin{equation*}
	 \int_{\R^3} \rho \AA\cdot\AA' \d\r\,  = +\infty,
	\end{equation*}
	and we obtain 
	\begin{equation*}
	+\infty =\big| \langle \rho \AA, \AA' \rangle \big| \leq 
	\onehalf \langle |\rho|, |\AA|^2 + |\AA'|^2 \rangle.
	\end{equation*}
	Thus $|\AA|^2 + |\AA'|^2 \notin \DensSpace^*$ and either $\vert \AA\vert^2$ or
 $\vert \AA'\vert^2$ is not an element of $\DensSpace^*$ (or both). This contradicts (C1).
\end{proof}

Compatible function spaces can be built up recursively, by combining different function spaces that are already known to be compatible.

\begin{theorem}[Compatibility of intersections and sums] \label{thmRecursiveCompatility}
	Suppose $\DensSpace_1$ and $\CurSpace_1$ are compatible, and the same holds for $\DensSpace_2$ and $\CurSpace_2$. Then (a) the intersections
	\begin{equation}
	    \DensSpace = \DensSpace_1 \cap \DensSpace_2, \quad \CurSpace = \CurSpace_1 \cap \CurSpace_2,
	\end{equation}
	are compatible. Moreover, (b) the sums
	\begin{equation}
	    \DensSpace = \DensSpace_1 + \DensSpace_2, \quad \CurSpace = \CurSpace_1 + \CurSpace_2,
	\end{equation}
	are compatible.
	(Norms of intersections and sums are as given in Definition~\ref{def:norms-union-plus}.)
\end{theorem}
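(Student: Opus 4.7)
The plan is to invoke Theorem~\ref{thmCompatilityPartEquiv}, which gives (C1) $\iff$ (C2), so in both (a) and (b) it suffices to verify (C1). The proof then rests on the standard duality identifications for intersections and sums of Banach spaces of measurable functions,
\begin{equation*}
(B_1 \cap B_2)^* = B_1^* + B_2^*, \qquad (B_1 + B_2)^* = B_1^* \cap B_2^*,
\end{equation*}
together with the pairing criterion from Appendix~\ref{app:functional} (namely, that $g \in B^*$ whenever $\int_{\R^3} |g||f|\d\r < +\infty$ for every $f \in B$) and the elementary pointwise bound $|\AA_1 + \AA_2|^2 \leq 2(|\AA_1|^2 + |\AA_2|^2)$.

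The sum case (b) is essentially immediate. For $\AA \in Y^* = (Y_1 + Y_2)^* = Y_1^* \cap Y_2^*$, we have $\AA \in Y_i^*$ for both $i$; by the individual compatibilities, $|\AA|^2 \in X_i^*$ for both $i$, so $|\AA|^2 \in X_1^* \cap X_2^* = (X_1 + X_2)^* = X^*$. The intersection case (a) requires a decomposition of the potential. For $\AA \in Y^* = (Y_1 \cap Y_2)^* = Y_1^* + Y_2^*$, write $\AA = \AA_1 + \AA_2$ with $\AA_i \in Y_i^*$; individual compatibility yields $|\AA_i|^2 \in X_i^*$. Then for any $\rho \in X = X_1 \cap X_2$, since $\rho \in X_i$ for both $i$,
\begin{equation*}
\int_{\R^3} |\AA|^2 |\rho| \d\r \leq 2\int_{\R^3} \bigl(|\AA_1|^2 + |\AA_2|^2\bigr)|\rho| \d\r < +\infty,
\end{equation*}
and the pairing criterion from Appendix~\ref{app:functional} delivers $|\AA|^2 \in X^*$.

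The principal obstacle is not deep but rather a bookkeeping one: one must confirm that the two duality identifications for sums and intersections apply to the abstract Banach function spaces under consideration, not only to the canonical $L^p$ setting where they are textbook material. Since $X_1, X_2$ (and $Y_1, Y_2$) are all Banach spaces of measurable functions on a common $\R^3$, the required embedding of each into a common Hausdorff ambient space is automatic and the identifications do apply. Should any delicacy arise in full generality, the same conclusions can be reached without explicitly naming the dual spaces, by running the pairing-finiteness argument throughout: decompose $\AA$ in case (a) and $\rho$ in case (b), and verify finiteness of $\int_{\R^3}|\AA|^2|\rho|\d\r$ directly using only the individual compatibilities.
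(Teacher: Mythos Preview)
Your proof is correct and follows essentially the same route as the paper: reduce to (C1) via Theorem~\ref{thmCompatilityPartEquiv}, use the duality identifications $(B_1\cap B_2)^* = B_1^*+B_2^*$ and $(B_1+B_2)^* = B_1^*\cap B_2^*$, decompose $\AA = \AA_1 + \AA_2$ in case~(a) with the pointwise bound $|\AA|^2 \leq 2(|\AA_1|^2 + |\AA_2|^2)$, and argue trivially in case~(b). Your explicit invocation of Appendix~\ref{app:functional} to pass from pairing finiteness to membership in $X^*$, and your remark on the general validity of the duality identifications, make the argument slightly more careful than the paper's, but the substance is the same.
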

\begin{proof}
	By Theorem~\ref{thmCompatilityPartEquiv} it is sufficient to prove (C1).
	
	Part (a): The dual spaces are $\DensSpace^* = \DensSpace_1^* + \DensSpace_2^*$ and $\CurSpace^* = \CurSpace_1^* + \CurSpace_2^*$. Decompose the vector potential as $\AA = \AA_1 + \AA_2$, with $\AA_1 \in \CurSpace_1^*$ and $\AA_2 \in \CurSpace_2^*$. Using the inequality $$2|\AA_1\cdot\AA_2| \leq  |\AA_1|^2 + |\AA_2|^2$$ property (C1) follows immediately, since $$|\langle \rho, |\AA|^2\rangle| \leq 2 \langle |\rho|, |\AA_1|^2 \rangle + 2\langle |\rho|, |\AA_2|^2 \rangle$$ and each of the two terms is finite by hypothesis.
	
	Part (b): The dual spaces in this case are $\DensSpace^* = \DensSpace_1^* \cap \DensSpace_2^*$ and $\CurSpace^* = \CurSpace_1^* \cap \CurSpace_2^*$. Trivially, for any $\AA \in \CurSpace^*$ we have $\AA \in \CurSpace_i^*$ and therefore, by hypothesis, $|\AA|^2 \in \CurSpace_i^*$ ($i=1,2$). Hence, $|\AA|^2 \in \CurSpace_1^* \cap \CurSpace_2^*$.
\end{proof}

As demonstrated in Ref.~\citenum{Laestadius2014} (see also Theorem~\ref{th:jLp} above), the paramagnetic current density satisfies $\jpvec_\psi \in \vec L^1$ for $\psi\in \mathcal W_N$. Combined with compatibility, this becomes a substantial condition on the vector potential space.

\begin{theorem}
	Let $\DensSpace$ and $\CurSpace$ be compatible function spaces for the particle density and
	current density. 
	Suppose furthermore that $\CurSpace \subseteq \vec L^1$. Then it follows that 
	$$\CurSpace^* \subseteq \vec X^*$$ 
	and from that automatically $\vec \DensSpace \subseteq \CurSpace \subseteq \CurSpace^{**}$. 
\end{theorem}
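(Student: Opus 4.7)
The plan is to prove the theorem in two stages: first establish the dual inclusion $Y^* \subseteq \vec{X}^*$ directly from compatibility, and then obtain the primal chain by duality.

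For the first stage, I would fix an arbitrary $\vec{A} = (A^1,A^2,A^3) \in Y^*$ and show each component $A^k$ lies in $X^*$. The key input is compatibility condition (C2), which gives $\rho\vec{A} \in Y$ for every $\rho \in X$. Since $Y \subseteq \vec{L}^1$ by hypothesis, each product $\rho A^k$ lies in $L^1(\R^3)$, so the pairing $\int_{\R^3} A^k \rho \,\d\r$ converges absolutely for every $\rho \in X$. I would then invoke the pairing-to-dual criterion recalled just before Theorem~\ref{thmCompatilityPartEquiv} (with full proof in Appendix~\ref{app:functional}) to conclude $A^k \in X^*$, giving $\vec{A} \in \vec{X}^*$ componentwise.

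For the second stage, I would take $\vec{\rho} \in \vec{X}$ and any $\vec{A} \in Y^*$. The first stage guarantees $A^k \in X^*$, so each $X$--$X^*$ pairing $\int_{\R^3} \rho_k A^k \,\d\r$ is finite, making $\langle \vec{\rho},\vec{A}\rangle = \sum_k \int_{\R^3} \rho_k A^k \,\d\r$ well defined for every $\vec{A} \in Y^*$. Applying the closed graph theorem to the set-theoretic inclusion $Y^* \subseteq \vec{X}^*$ upgrades it to a continuous Banach-space embedding, so $\vec{A} \mapsto \langle \vec{\rho},\vec{A}\rangle$ becomes bounded on $Y^*$ and identifies $\vec{\rho}$ with an element of $Y^{**}$. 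Combined with the canonical isometric embedding $Y \hookrightarrow Y^{**}$, this delivers the chain $\vec{X} \subseteq Y \subseteq Y^{**}$ in the sense understood throughout the paper.

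The hard part will be the closed-graph upgrade. Its success hinges on the observation that norm-convergence in a Banach space of measurable functions entails a.e.-convergence along a subsequence, which rules out any inconsistent limit between the two norms $\|\cdot\|_{Y^*}$ and $\|\cdot\|_{\vec{X}^*}$. A secondary subtlety I expect is the identification of $\vec{\rho} \in Y^{**}$ with a genuine element of $Y$ rather than merely of its bidual; under the paper's standing framework, where $Y$ is a Banach space of measurable functions sitting canonically inside $Y^{**}$ via the integration pairing, this identification is automatic, but it is worth explicitly flagging for readers working with more abstract Banach spaces.
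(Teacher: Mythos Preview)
Your Stage~1 argument is correct and takes a cleaner, genuinely different path from the paper. The paper works through (C1): since $Y\subseteq\vec{L}^1$ forces $\vec{L}^\infty\subseteq Y^*$, one may shift $\vec{A}=(f,g,h)$ to $(1+f,g,h)\in Y^*$, apply (C1) to obtain $1+2f+|\vec{A}|^2\in X^*$, and then subtract off $1\in X^*$ and $|\vec{A}|^2\in X^*$ to isolate $f\in X^*$. Your route via (C2) avoids this polarization trick entirely and makes the role of the hypothesis $Y\subseteq\vec{L}^1$ transparent: it is precisely what converts the abstract containment $\rho\vec{A}\in Y$ into honest $L^1$-integrability of each component $\rho A^k$, so that the Appendix criterion applies directly.

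For Stage~2 the paper offers no argument at all, merely the word ``automatically''; your closed-graph reasoning is a sensible way to supply one, but be aware that it only yields $\vec{X}\subseteq Y^{**}$. Your claim that the further identification with an element of $Y$ itself is ``automatic in the paper's framework'' is not justified when $Y$ is non-reflexive (e.g.\ $Y=\vec{L}^1$, where $Y^{**}$ is strictly larger). This is a point on which the paper itself is loose, so your explicit flag is well placed; but do not oversell the resolution.
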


\begin{proof}
  Let $\vec{A} = (f,g,h) \in Y^*$ be an arbitrary vector potential. From $Y \subseteq \vec L^1$ we immediately have $Y^* \supseteq \vec L^{\infty}$ and therefore we can add a constant to one of the components of $\vec{A}$ without leaving the potential space $Y^*$,
  \begin{equation*}
      \vec{a} = (1+f,g,h) \in Y^*.
  \end{equation*}
  Next, as we assume compatibility of $X$ and $Y$, we have
  \begin{align*}
      \vert \vec a \vert^2 &= (1+f)^2+g^2+h^2 \\
      &= 1 + 2f + f^2+g^2+h^2 \in X^*.
  \end{align*}
  Furthermore, by assumption $f^2,g^2,h^2 \in X^*$. Compatibility combined with the fact that $1 \in L^{\infty}$ yields $1 \in X^*$. Thus, the only remaining term $2f$ must be an element of $X^*$ too. Repeating the proof, with trivial changes for the other components, yields $f,g,h \in X^*$.
\end{proof}

When the preconditions of the above theorem are satisfied and $\AA$ is an allowed vector potential, we thus have the peculiar situation that \emph{both $|\AA|^2$ and $|\AA|$} are contained in the function space of allowed scalar potentials. \newline
\indent Next, we turn to examples of reasonable choices of function spaces for paramagnetic CDFT that also illustrate the duality and compatibility conditions.
Following Lieb~\cite{Lieb83}, we may choose the non-reflexive space
$\DensSpace = L^1 \cap L^3$ for the particle densities and its dual
$\DensSpace^* = L^{3/2} + L^{\infty}$ for the scalar potentials. For
the current densities, we first consider the choice in the literature~\cite{Laestadius2014}, where current densities were placed in the non-reflexive
space $\CurSpace = \vec{L}^1$. Compatibility then follows trivially.

\begin{proposition}
	\label{Prop:VerySimpleCompatibility}
	The choice $\DensSpace = L^1 \cap L^3$ and
	$\CurSpace = \vec{L}^1$ is compatible.
\end{proposition}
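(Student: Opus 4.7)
The plan is short because the statement is essentially a sanity check that the classical Lieb space for densities pairs naturally with the simplest possible choice $\vec L^1$ for current densities. First I would invoke Theorem~\ref{thmCompatilityPartEquiv} to reduce the task to verifying only condition (C1), namely that $|\AA|^2 \in X^*$ for every $\AA \in Y^*$.

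Next I would identify the dual spaces explicitly. For $Y = \vec L^1$ the dual is $Y^* = \vec L^\infty$ by standard $L^p$ duality. For $X = L^1 \cap L^3$, whose norm is $\max\{\|\cdot\|_1,\|\cdot\|_3\}$ as in Definition~\ref{def:norms-union-plus}, the dual is $X^* = L^{3/2} + L^\infty$; this is the classical identification used throughout Lieb's work~\cite{Lieb83} and already adopted elsewhere in the paper.

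With these identifications in hand, the verification of (C1) is immediate: for any $\AA \in \vec L^\infty$ each component is essentially bounded, so $|\AA|^2 \in L^\infty \subseteq L^{3/2} + L^\infty = X^*$. A second application of Theorem~\ref{thmCompatilityPartEquiv} then yields (C2) for free, and compatibility is established.

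The main obstacle is essentially nonexistent, which is precisely why this choice has historically been attractive as a starting point: $\vec L^\infty$ is trivially closed under squaring of the pointwise norm and embeds into any reasonable scalar-potential space. The only subtlety is the computation of the dual of an intersection of $L^p$ spaces, but this is standard. The proposition is meant to illustrate the definitions before the subsequent, more demanding constructions where reflexivity forces one to abandon this easy choice.
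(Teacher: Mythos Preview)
Your proposal is correct and matches the paper's approach: the paper simply states that ``compatibility then follows trivially'' without spelling out a proof, and your argument via (C1) and the observation that $|\AA|^2\in L^\infty\subseteq L^{3/2}+L^\infty$ is precisely the intended triviality.
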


As a second example a choice of reflexive, compatible spaces should be given. Reflexivity is imperative for the construction of a well-defined KS scheme like in Sec.~\ref{sec:MYKSODA}. To achieve this we just drop the non-reflexive $L^1$ from $\DensSpace$ in the example above and switch to $\vec{L}^{3/2}$ instead of $\vec{L}^1$ for $\CurSpace$.

\begin{proposition}
	\label{Prop:reflexiveCompatible}
	The strictly convex and reflexive choice $\DensSpace = L^3$,
	$\CurSpace = \vec{L}^{3/2}$ is compatible.
\end{proposition}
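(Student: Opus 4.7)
The plan is to exploit the duality between Lebesgue spaces and reduce the verification to a direct Hölder-type calculation. Since $X = L^3$ and $Y = \vec L^{3/2}$ are reflexive with well-known duals, I first identify $X^* = L^{3/2}$ and $Y^* = \vec L^3$. By Theorem~\ref{thmCompatilityPartEquiv}, it would actually suffice to check only one of (C1) or (C2), but both are so short in this case that I would verify each directly for clarity.

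For (C1), I would take an arbitrary $\AA = (A^1, A^2, A^3) \in \vec L^3$ and observe that $(|\AA|^2)^{3/2} = |\AA|^3 = (\sum_k |A^k|^2)^{3/2}$, which is integrable because each $A^k \in L^3$ implies $|\AA| \in L^3$ by the triangle inequality. Hence $|\AA|^2 \in L^{3/2} = X^*$.

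For (C2), I would take $\rho \in L^3$ and $\AA \in \vec L^3$ and apply Hölder's inequality component-wise with the exponent pair $(3,3)$, noting that $\tfrac{1}{3}+\tfrac{1}{3} = \tfrac{2}{3} = \tfrac{1}{3/2}$. This yields
\begin{equation*}
  \|\rho A^k\|_{3/2} \leq \|\rho\|_3 \|A^k\|_3 < +\infty
\end{equation*}
for each $k$, so that $\rho \AA \in \vec L^{3/2} = Y$, establishing (C2).

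There is no real obstacle here; the proposition is essentially a bookkeeping statement that the exponents line up correctly. The only thing to keep in mind is that the claim about strict convexity and reflexivity is already justified by the Hanner inequality remark preceding Definition~\ref{def:XRYR}, so it does not need to be re-proven inside this proposition — the content to verify is precisely conditions (C1) and (C2) of Definition~\ref{defCompatibility}.
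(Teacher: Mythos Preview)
Your proposal is correct. The paper does not include an explicit proof of this proposition, treating it as immediate from the identification of the duals $X^* = L^{3/2}$ and $Y^* = \vec L^3$; your direct verification of (C1) and (C2) via H\"older's inequality is exactly the intended argument.
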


Finally the admissible spaces $\DensSpace = L^1 \cap L^3$ and $\CurSpace = \vec{L}^1 \cap \vec{L}^{3/2}$ that were derived from considerations regarding $N$-representability in Sec.~\ref{sec:functionSp} should be examined with respect to compatibility. It turns out that by applying Theorem \ref{thmRecursiveCompatility} we can just build these spaces as intersections of the ones from Propositions \ref{Prop:VerySimpleCompatibility} and~\ref{Prop:reflexiveCompatible}.

\begin{corollary}
	\label{Cor:adminissibleCompatible}
	The choice $\DensSpace = L^1 \cap L^3$ and
	$\CurSpace = \vec{L}^1 \cap \vec{L}^{3/2}$ is compatible.
\end{corollary}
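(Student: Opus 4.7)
The plan is to recognize that the claim is essentially immediate from the recursive construction principle in Theorem~\ref{thmRecursiveCompatility}(a), once the target spaces are rewritten as intersections of spaces whose compatibility is already established in Propositions~\ref{Prop:VerySimpleCompatibility} and~\ref{Prop:reflexiveCompatible}.

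Concretely, I would set $X_1 = L^1 \cap L^3$, $Y_1 = \vec L^1$ (compatible by Proposition~\ref{Prop:VerySimpleCompatibility}) and $X_2 = L^3$, $Y_2 = \vec L^{3/2}$ (compatible by Proposition~\ref{Prop:reflexiveCompatible}). The first step is then a purely set-theoretic observation: since $L^1 \cap L^3 \subseteq L^3$, one has
\[
X_1 \cap X_2 = (L^1 \cap L^3) \cap L^3 = L^1 \cap L^3,
\]
while by definition
\[
Y_1 \cap Y_2 = \vec L^1 \cap \vec L^{3/2}.
\]
Hence the target spaces $X = L^1 \cap L^3$ and $Y = \vec L^1 \cap \vec L^{3/2}$ are exactly the intersections to which part~(a) of Theorem~\ref{thmRecursiveCompatility} applies.

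Applying that theorem directly yields compatibility of $X$ and $Y$, which is the claim. No further verification of (C1) or (C2) is needed, since Theorem~\ref{thmCompatilityPartEquiv} has already reduced the condition to a single one and Theorem~\ref{thmRecursiveCompatility} was proved at that level of generality. There is essentially no obstacle; the only thing to be slightly careful about is to note explicitly that the norm on the intersection used here is the one from Definition~\ref{def:norms-union-plus}, so that the resulting Banach space structure is the standard one and the dual spaces decompose as $X^* = L^{3/2} + L^\infty$ and $Y^* = \vec L^3 + \vec L^\infty$, consistent with the ambient reflexive/non-reflexive framework used throughout the paper.
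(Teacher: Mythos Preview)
Your proposal is correct and matches the paper's own approach exactly: the paper states just before the corollary that one applies Theorem~\ref{thmRecursiveCompatility}(a) to the intersections of the compatible pairs from Propositions~\ref{Prop:VerySimpleCompatibility} and~\ref{Prop:reflexiveCompatible}, which is precisely what you do (including the observation that $(L^1\cap L^3)\cap L^3 = L^1\cap L^3$).
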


Many other compatible function spaces can be constructed. The previous
examples exclude the common case of uniform magnetic fields as these
require linearly growing vector potentials, e.g., 
$\AA = \frac{1}{2} \vec{B} \times \vec{r}$, which do not belong to any
$\vec L^p(\mathbb{R}^3)$ space. We refer to Ref.~\citenum{Tellgren2018} for a treatment of this situation. The next result provides function spaces
that allow for inclusion of such vector potentials. It is a little detour to other possible choices for compatible Banach spaces, before we return to the discussion of energy functionals defined on them.

Let $w(\r)$ be a suitable weight function. We use the notation $f \in L^p(w)$ for the weighted Lebesgue space defined by
\begin{equation*}
\int_{\mathbb R^3}  |f(\r)|^p w(\r) \d \r < +\infty.
\end{equation*}
Note that $w f \in L^p$ is equivalent to $f \in L^p(w^p)$. 
Some care is required however, as the two forms may produce inequivalent results when multiple weighted $L^p$-spaces are considered.
For example, $f \in L^p(w')\cap L^q(w'')$ in general cannot be represented as $wf \in L^p \cap L^q$, for any weight function $w$. In the following examples the weight function is assumed to
satisfy $w(\r) \geq 1$ for all $\r \in \mathbb{R}^3$.

\begin{theorem}
	Let $Z$ be a normed space with dual $Z^*$.
	Then each of the following choices of function spaces is compatible,
	\begin{equation}\label{eq:thmA1}
	\begin{split}
	& w^2 \rho \in L^1 \cap Z, \quad w \,\jpvec \in \vec L^1,\\
	& w^{-2} v \in L^{\infty} + Z^*, \quad w^{-1} \AA \in \vec L^{\infty}, 
	\end{split}
	\end{equation}
	or, with $1 < p < \infty$,
	\begin{equation}\label{eq:thmA2}
	\begin{split}
	& w^2 \rho \in L^{p/(p-1)} \cap Z, \quad w \,\jpvec \in \vec L^{2p/(2p-1)},\\
	& w^{-2} v \in L^p + Z^*, \quad w^{-1} \AA \in \vec L^{2p}.
	\end{split}
	\end{equation}
\end{theorem}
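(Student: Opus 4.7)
My plan is to invoke Theorem~\ref{thmCompatilityPartEquiv} so that compatibility reduces to verifying condition (C1) alone, namely $\vert\AA\vert^2 \in X^*$ for every admissible $\AA \in Y^*$. I begin by translating the weighted descriptions in \eqref{eq:thmA1} and \eqref{eq:thmA2} into standard $L^p$-language through the pairing-preserving substitutions $\rho \leftrightarrow w^2\rho$, $\jpvec \leftrightarrow w\jpvec$, $v \leftrightarrow w^{-2}v$, $\AA \leftrightarrow w^{-1}\AA$, which are genuine isomorphisms because $w(\r)\ge 1$. Under these bijections the density and current spaces become the unweighted $L^1 \cap Z$ (resp.\ $L^{p/(p-1)} \cap Z$) and $\vec L^1$ (resp.\ $\vec L^{2p/(2p-1)}$). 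Combining the classical $L^q$--$L^{q/(q-1)}$ duality with the dual-of-intersection formula $(B_1\cap B_2)^* = B_1^* + B_2^*$ relative to the norms of Definition~\ref{def:norms-union-plus}, the dual spaces come out exactly as stated, so that (C1) is equivalent to $w^{-2}\vert\AA\vert^2 \in L^\infty + Z^*$ in case \eqref{eq:thmA1}, respectively $\in L^p + Z^*$ in case \eqref{eq:thmA2}.

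The heart of the argument is then the pointwise identity
\[
w^{-2}\vert\AA\vert^2 \;=\; \sum_{k=1}^3 (A^k/w)^2 \;=\; \vert w^{-1}\AA\vert^2.
\]
In case \eqref{eq:thmA1}, the hypothesis $w^{-1}\AA \in \vec L^\infty$ yields $\vert w^{-1}\AA\vert^2 \in L^\infty \subseteq L^\infty + Z^*$. In case \eqref{eq:thmA2}, $w^{-1}\AA \in \vec L^{2p}$ makes each component $(A^k/w)^2$ an element of $L^p$, hence their sum also lies in $L^p \subseteq L^p + Z^*$. Either way (C1) holds and the compatibility of $\DensSpace$ and $\CurSpace$ is established.

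The step requiring the most care is the dual-space bookkeeping: checking that each weighted intersection space really has the claimed Banach dual under the ambient pairing $\int v\rho\,\d\r$. This is precisely what the change of variables accomplishes, reducing the question to the well-known duality $(L^q\cap Z)^* = L^{q/(q-1)} + Z^*$. The normed space $Z$ plays no active role in enforcing (C1); it only enlarges $X^*$ by an extra summand $Z^*$, and the required containment is produced in full by the weighted $L^\infty$ or $L^p$ factor alone. In retrospect, one could equivalently derive the theorem by combining Theorem~\ref{thmRecursiveCompatility}(a) with the weight-only version of the result, but the direct verification above is more transparent.
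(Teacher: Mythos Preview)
Your proof is correct and follows essentially the same route as the paper: both reduce to (C1) via Theorem~\ref{thmCompatilityPartEquiv} and then use the pointwise identity $w^{-2}\lvert\AA\rvert^2 = \lvert w^{-1}\AA\rvert^2$ to conclude that $w^{-1}\AA \in \vec L^{\infty}$ (resp.\ $\vec L^{2p}$) forces $w^{-2}\lvert\AA\rvert^2 \in L^{\infty}$ (resp.\ $L^{p}$). Your extra paragraph on the dual-space bookkeeping is more explicit than the paper---which simply takes the listed potential spaces as the duals---but the mathematical content is the same.
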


\begin{proof}
    In both cases, we exploit the equivalence of (C1) and (C2), and only prove one of them.
    First, for the choice in Eq.~\eqref{eq:thmA1}, condition (C1) is trivial, since $w^{-1} \AA \in \vec L^{\infty}$ directly implies $w^{-2} |\AA|^2 \in L^{\infty}$. Second, for Eq.~\eqref{eq:thmA2}, we similarly have that $w^{-1} \AA \in \vec L^{2p}$ directly implies $w^{-2} |\AA|^2 \in L^{p}$, establishing (C1).
\end{proof}

In the first of the above examples, Eq.~\eqref{eq:thmA1}, we can make the trivial choice $w(\r) = 1$ and $Z =  L^3$ to recover the function space analyzed in Ref.~\citenum{Laestadius2014}. Moreover, with suitable, non-trivial weight functions, unbounded vector potentials can be considered. For example, the choice $w(\r) = (1+|\r|^2)^{1/2}$ was studied extensively in Ref.~\citenum{Tellgren2018} as it allows for uniform magnetic fields and always ensures that the angular momentum is well-defined.

\subsection{Finiteness of the energy functional}
\label{sec:Efiniteness}

The following general property of the CDFT energy functional $E^\lambda(v,\AA)$ will be useful later. It says that for both interacting and non-interacting systems the energy is finite for all considered potentials. Furthermore, if the choice of function spaces is compatible, the same boundedness from below holds for $\bar{E}^\lambda(u,\AA)$. 

\begin{lemma}\label{lemma:Efiniteness}
$E^\lambda(v,\AA)$ is finite for
\[
(v,\AA) \in (L^{3/2} + L^\infty) \times ( \vec L^3 + \vec L^\infty),
\]
corresponding to the density space $(\rho,\jpvec) \in (L^1\cap L^3) \times (\vec L^1 \cap \vec L^{3/2})$.
By compatibility, $\bar E^\lambda(u,\AA)$ is also finite on the same domain.
\end{lemma}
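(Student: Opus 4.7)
My approach is to bound $E^\lambda(v,\AA)$ above and below separately. For the upper bound I would fix any trial state $\psi_0 \in \mathcal W_N \cap \mathcal H^2(\mathbb R^{3N})$, e.g.\ an antisymmetrised product of Gaussians. By Propositions~\ref{prop:rhoLp} and~\ref{prop:L2} its density lies in every $L^p$ and its paramagnetic current in $\vec L^2$, so every term in
\begin{equation*}
\langle \psi_0, H^\lambda(v,\AA)\psi_0\rangle = \mathcal K(\psi_0) + \lambda \langle \psi_0, W\psi_0\rangle + \langle v,\rho_{\psi_0}\rangle + \langle \AA,\jpvec_{\psi_0}\rangle + \tfrac12\langle |\AA|^2,\rho_{\psi_0}\rangle
\end{equation*}
is finite under the stated hypotheses on $(v,\AA)$, giving $E^\lambda(v,\AA) < +\infty$.

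The substance of the proof is a uniform lower bound over $\mathcal W_N$. Using the same expansion, the terms $\mathcal K(\psi)$, $\lambda\langle \psi,W\psi\rangle$, and $\tfrac12 \langle |\AA|^2,\rho_\psi\rangle$ are non-negative, so only the indefinite pieces $\langle v,\rho_\psi\rangle$ and $\langle \AA,\jpvec_\psi\rangle$ must be controlled. The crucial device is a truncation splitting: for any $\eps>0$, any $v \in L^{3/2}+L^\infty$ admits a decomposition $v = v_1 + v_2$ with $v_1 \in L^{3/2}$, $\|v_1\|_{3/2}<\eps$ (obtained from an initial splitting by moving the piece $v_1^{(0)}\chi_{|v_1^{(0)}|\leq M}$ into the $L^\infty$ part for large $M$), and similarly $\AA = \AA_1 + \AA_2$ with $\|\AA_1\|_3 < \eps$ and $\AA_2 \in \vec L^\infty$. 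Combining H\"older's inequality with the Sobolev bound \eqref{Sob2}, $\|\rho_\psi\|_3 \leq 2S\,\mathcal K(\psi)$, and the current bounds from Theorem~\ref{th:jLp}, $\|j^k_\psi\|_{3/2}\leq 2 S^{1/2}\mathcal K(\psi)$ and $\|j^k_\psi\|_1 \leq (2N\mathcal K(\psi))^{1/2}$, I obtain an estimate of the form
\begin{equation*}
|\langle v,\rho_\psi\rangle| + |\langle \AA,\jpvec_\psi\rangle| \leq c\,\eps\,\mathcal K(\psi) + C_{\AA,2}\,\mathcal K(\psi)^{1/2} + C_{v,2},
\end{equation*}
with $c$ a universal constant and $C_{v,2},C_{\AA,2}$ depending only on $\|v_2\|_\infty, \|\AA_2\|_\infty$. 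Choosing $\eps$ so that $c\eps\leq 1/2$, what remains of $\langle \psi, H^\lambda\psi\rangle$ is at least
\begin{equation*}
\tfrac12 \mathcal K(\psi) - C_{\AA,2}\,\mathcal K(\psi)^{1/2} - C_{v,2},
\end{equation*}
a quadratic in $\mathcal K(\psi)^{1/2}$ bounded below independently of $\psi$. Hence $E^\lambda(v,\AA) > -\infty$.

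Finiteness of $\bar E^\lambda$ on the same domain then follows immediately from compatibility (Corollary~\ref{Cor:adminissibleCompatible}): $\AA\in \vec L^3+\vec L^\infty$ implies $|\AA|^2 \in L^{3/2}+L^\infty$, so the change of variables $u = v + \tfrac12|\AA|^2$ preserves the scalar-potential space, and by \eqref{eq:E-transform} the bounds on $E^\lambda$ transfer directly to $\bar E^\lambda(u,\AA)$.

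The step I expect to need the most care is the small-$\eps$ splitting. A single, fixed decomposition of $v$ and $\AA$ produces a coefficient of $\mathcal K(\psi)$ governed by the full $L^{3/2}$- and $L^3$-norms of the chosen representatives; these need not be less than $1/(2S)$ or the analogous current threshold, in which case the kinetic energy no longer dominates and the lower bound collapses. Exploiting the freedom inherent in the sum-space structure---pushing large values of $v_1$ and $\AA_1$ into the $L^\infty$ part via the truncation $\chi_{|v_1|>M}$---is what allows absorption into $\mathcal K(\psi)$, and it is essential that the hypothesis is $v\in L^{3/2}+L^\infty$ rather than just $v\in L^{3/2}$.
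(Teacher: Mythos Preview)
Your argument is correct, but the treatment of the magnetic terms differs from the paper's. You discard the non-negative diamagnetic contribution $\tfrac12\langle|\AA|^2,\rho_\psi\rangle$ and control $\langle\AA,\jpvec_\psi\rangle$ directly via the $L^p$ current bounds of Theorem~\ref{th:jLp}; this forces a second small-norm splitting for $\AA$ and leaves a residual $C_{\AA,2}\,\mathcal K(\psi)^{1/2}$ term to be absorbed by a quadratic-in-$\sqrt{\mathcal K}$ argument. The paper instead \emph{keeps} the diamagnetic term together with the $\tfrac12\int|\jpvec|^2/\rho$ piece of the sharpened von Weizs\"acker bound~\eqref{eq:Furches} and completes the square
\[
\tfrac12\int_{\R^3}\frac{|\jpvec|^2}{\rho}\,\d\r + \int_{\R^3}\AA\cdot\jpvec\,\d\r + \tfrac12\int_{\R^3}|\AA|^2\rho\,\d\r
= \tfrac12\int_{\R^3}\Big|\frac{\jpvec}{\sqrt\rho}+\sqrt\rho\,\AA\Big|^2\d\r \geq 0,
\]
eliminating all $\AA$-dependence at once; only $\int v\rho$ survives, and only $v$ requires the small-$L^{3/2}$-norm splitting. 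Your route is a standard relative-form-boundedness argument and is perfectly valid; the paper's is more economical---no splitting of $\AA$, no appeal to Theorem~\ref{th:jLp}, and an explicit lower bound $-N\|v_\infty\|_\infty$ rather than the minimum of a quadratic. (You also supply the upper bound explicitly via a trial state; the paper leaves that implicit.)
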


\begin{proof}
To prove finiteness (of the infimum) it is enough to prove boundedness from below. 
By definition, we have for any $(v,\AA)\in (L^{3/2} + L^\infty) \times ( \vec L^3 + \vec L^\infty)$
\begin{align}
E^\lambda(v,\AA) &=\inf_{\substack{(\rho,\jpvec) \\ \in (L^1\cap L^3) \times (\vec L^1 \cap \vec L^{3/2})  }}\Big\{ F^\lambda(\rho,\jpvec) +
\int_{\mathbb R^3} \AA \cdot \jpvec \d \r \nonumber \\
& \quad  \quad +  \int_{\mathbb R^3} \Big(v + \onehalf\vert \AA \vert^2 \Big) \rho \d \r \Big\}.
\label{eq:EvAA1}
\end{align}
Moreover, if $(\rho,\jpvec)$ is not $N$-representable then $F^\lambda(\rho,\jpvec) =+\infty$ and we can consequently consider only $N$-representable density pairs.
By definition $F^\lambda \geq F^0$ and furthermore $F^0(\rho,\jpvec) \geq  \mathrm{inf}_\psi\, \mathcal K(\psi)$ for all $\psi$ with $\rho_\psi= \rho$ and $\jpvec_\psi = \jpvec$. 
By the von Weizs\"acker bound in Eq.~\eqref{eq:Furches}, we obtain  
\[
F^0(\rho,\jpvec)  \geq \frac 1 2\int_{\mathbb R^3} \vert \nabla \rho^{1/2} \vert^2 \d \r +\frac 1 2 \int_{\mathbb R^3} \vert \jpvec \vert^2\rho^{-1} \d \r,
\]
for $N$-representable density pairs $(\rho,\jpvec)$. The Sobolev inequality in Eq.~\eqref{Sob} further allows for
\begin{align}
F^0(\rho,\jpvec) \geq \frac{1}{2S} \Vert \rho \Vert_3 + \frac 1 2 \int_{\mathbb R^3}  \vert \jpvec \vert^2 \rho^{-1} \d \r.
\label{eq:EvAA2}
\end{align}
Combining Eqs.~\eqref{eq:EvAA1} and \eqref{eq:EvAA2}, it follows
\begin{align*}
E^\lambda(v,\AA) \geq & \inf_{(\rho,\jpvec) } \Big\{ \frac 1 {2S} \Vert \rho \Vert_3 + \frac 1 2 \int \vert \jpvec\vert^2 \rho^{-1}\d\r \\ &+
\int_{\mathbb R^3} \AA \cdot \jpvec \d \r 
+ \int_{\mathbb R^3} \Big(v +\onehalf \vert \AA \vert^2 \Big) \rho \d \r \Big\},
\end{align*}
where the minimization is restricted to $N$-representable density pairs. Using the obvious inequality $$\vert \jpvec/\rho^{1/2} + \rho^{1/2} \AA \vert^2\geq 0$$ to replace the full square, we have 
\begin{align}
E^\lambda(v,\AA) \geq \inf_{\rho \in\mathcal I_N } \Big\{  \frac{1}{2S} \Vert \rho\Vert_3  + \int_{\R^3} v \rho \d \r \Big\}.
\label{eq:EvAA3}
\end{align}

Next we bound the r.h.s.\ of Eq.~\eqref{eq:EvAA3} from below. Since $\mathcal{C}_0^\infty$ is dense in $L^{3/2}$, also $L^\infty$ is. This means that we can split $v = v_{S} + v_\infty$ with $v_{S} \in L^{3/2}$, $v_\infty \in L^\infty$ in such a way that $\|v_{S}\|_{3/2}$ is arbitrarily small. We choose the decomposition such that $\|v_{S}\|_{3/2} \leq (2S)^{-1}$ with the constant $S$ from Eq.~\eqref{Sob2}. Then using H\"older's inequality Eq.~\eqref{eq:EvAA3} can be estimates as
\begin{align*}
E^\lambda(v,\AA) &\geq
\inf_{\rho \in \mathcal I_N} \Big\{  \Big( \frac{1}{2S}  - \|v_{S}\|_{3/2} \Big) \|\rho\|_3 + \int_{\R^3} v_\infty \rho \d \r \Big\} \\
&\geq -N\|v_\infty\|_\infty.
\end{align*}
This proves the claim for $E^\lambda(v,\AA)$ and by compatibility also for $\bar E^\lambda(u,\AA)$.
\end{proof}

\begin{remark}\label{rmk:finitXRYR}
The reflexive space $X_\mathrm{R}^* \times Y_\mathrm{R}^*$ is a subset of the domain given in Lemma~\ref{lemma:Efiniteness} and corresponds to a compatible density space by Proposition~\ref{Prop:reflexiveCompatible}. It follows by an adaptation of the proof of Lemma~\ref{lemma:Efiniteness} that $E^\lambda(v,\AA)$ and $\bar E^\lambda(u,\AA)$ are finite on $X_\mathrm{R}^* \times Y_\mathrm{R}^*$.
In this case when we just assume $(\rho,\jpvec) \in L^3\times \vec L^{3/2}$, however, it is crucial to exploit that the minimization can be restricted to $N$-representable density pairs, such that 
\begin{align*}
\int_{\R^3} v\rho \d\r &=  \int_{\R^3} (v - \varphi)\rho \d \r+ \int_{\R^3} \varphi \rho \d\r \\
&\geq -\Vert v - \varphi \Vert_{3/2} \Vert \rho\Vert_3 - N \Vert \varphi\Vert_\infty,
\end{align*}
where $\varphi \in C_0^\infty$ has been chosen such that $v - \varphi$ has sufficiently small $L^{3/2}$ norm.
\end{remark}

\section{Regularization and the Kohn--Sham iteration scheme}
\label{sec:IIIks}
In our previous work~\cite{KSpaper2018} the general theory of a quantum system described by the (density) variable $b\in B$, where $B$ is a reflexive ($B^{**} = B$) and strictly convex Banach space, was presented. In this theory the general problem
\[
E(b^*) = \inf_{b\in B} \{ f(b) + \langle b^*,b\rangle \}
\]
(here $\langle b^*,b \rangle$ denotes the dual pairing that is not necessary given by an integral) with $f\in \Gamma_0(B)$ is studied.
We here wish to apply this structure to paramagnetic CDFT, i.e., $b= (\rho,\jpvec)$, $b^*=(u,\AA)$, and $f(b) = F^\lambda(\rho,\jpvec)$. We choose the density space from Definition~\ref{def:XRYR}
\[
X_\mathrm{R}\times Y_\mathrm{R} = L^3 \times \vec L^{3/2}
\]
that was already discussed in Proposition \ref{Prop:reflexiveCompatible} to meet the requirements of strict convexity, reflexivity, and compatibility. The dual potential space then is 
\[
X_\mathrm{R}^* \times Y_\mathrm{R}^* = L^{3/2} \times \vec L^{3}.
\]
Another way to gain reflexivity is by limiting the spatial domain to a bounded set $\Omega \subsetneq \R^3$. Then $\DensSpace = L^{1}(\Omega) \cap L^{3}(\Omega),\CurSpace = \vec L^{1}(\Omega) \cap \vec L^{3/2}(\Omega)$ automatically collapse to $\DensSpace = L^{3}(\Omega),\CurSpace = \vec L^{3/2}(\Omega)$, which are again reflexive. We also want to remark that in this case the usual Coulomb potential is included in $L^{3/2}(\Omega)$ since it is in $L_\mathrm{loc}^{3/2}(\R^3)$.
\newline
\indent 
Furthermore, compatibility of $X_\mathrm{R} \times Y_\mathrm{R}$ gives that the concave $\bar E^\lambda$ (restricted to $X_\mathrm{R}^* \times Y_\mathrm{R}^*$) can be defined. By Lemma~\ref{lemma:Efiniteness} and Remark~\ref{rmk:finitXRYR}, this energy is also bounded below. 
To connect the CDFT functionals $F_\mathrm{VR}^\lambda(\rho,\jpvec)$ (or $F_\mathrm{VR,DM}^\lambda(\rho,\jpvec)$) and $\bar E^\lambda(u,\AA)$, we have already introduced the Legendre--Fenchel transformation in Eqs.~\eqref{eq:F-LF} and \eqref{eq:E-LF}. It also relates the Lieb functional $F^\lambda$ and the concave energy functional $\bar E^\lambda$ vice versa as a conjugate pair. Also note that $F^\lambda \in \Gamma_0(X_\mathrm{R} \times Y_\mathrm{R})$ by Lemma~\ref{lemma:Gamma}.
The next step lies in another type of transformation that makes $F^\lambda$ functionally differentiable too, which is achieved by the Moreau--Yosida regularization.

\subsection{Moreau--Yosida regularization}
\label{sec:MY}

The original problem given in Eq.~\eqref{eq:E-LF} of finding a ground-state density pair $(\rho,\jpvec) \in X_\mathrm{R} \times Y_\mathrm{R}$ by minimizing the \emph{convex} functional $F^\lambda$ plus the potential energy can be restated using sub-/superdifferentials (Definition~2 in Ref.~\citenum{KSpaper2018}), both denoted $\partial$.
It means selecting $(\rho,\jpvec)$ from the superdifferential of $\bar E^\lambda$ at $(u,\AA)\in X_\mathrm{R}^* \times Y_\mathrm{R}^*$. Through the Legendre--Fenchel transformation Eq.~\eqref{eq:F-LF}, the same is possible for $F^\lambda$. With a minus sign in front, the potential pair $(u,\AA)$ yielding the ground state lies in the subdifferential of $F^\lambda$ at $(\rho,\jpvec)$ (see Lemmas~3 and~4 in Ref.~\citenum{KSpaper2018}),
\begin{equation}\label{eq:conjugate-differentials}
	(\rho,\jpvec) \in \partial \bar E^\lambda(u,\AA)  \Longleftrightarrow  -(u,\AA) \in \partial F^\lambda(\rho,\jpvec).
\end{equation}
This statement can be seen as a more general reformulation of the HK theorem, but only with a $(u,\AA)$ potential pair, which is different from the physical $(v,\AA)$ setting.\newline
\indent The generalized notions of differentiability for convex/concave functionals involve the difficulty of non-existence or non-uniqueness. Sub- and superdifferentials are set-valued and can thus be empty or contain many elements. It is thus beneficial to ``smooth out'' the functional $F^\lambda$ in such a way that it is differentiable, which implies that the subdifferential contains only one single element. (Note that in infinite dimensions only for a continuous functional a single element in the subdifferential implies differentiability.) 
This is achieved by the Moreau--Yosida regularization of $F^\lambda \in \Gamma_0(X_\mathrm{R} \times Y_\mathrm{R})$, for $\eps>0$ given by
\begin{equation}\label{eq-defMY}
F^\lambda_\eps(\rho,\jpvec) = \inf_{\substack{(\sigma,\vec k) \\ \in X_\mathrm{R} \times Y_\mathrm{R}}} \left\{ F^\lambda(\sigma,\vec k) + \frac{1}{2\eps} \Vert (\rho,\jpvec)-(\sigma,\vec k)\Vert^2 \right\}.
\end{equation}
Since $F^\lambda$ is convex, the new functional $F_\eps^\lambda$ is convex as well and now also functionally differentiable by Theorem~9 in Ref.~\citenum{KSpaper2018}.
This regularized functional then serves as the basis for defining a new energy functional $\bar E^\lambda_\eps$ through the Legendre--Fenchel transformation Eq.~\eqref{eq:E-LF} again
\begin{align*}
\bar{E}_\eps^{\lambda}(u,\AA) = \inf_{(\rho,\jpvec) \in X_\mathrm{R} \times Y_\mathrm{R}} \Big\{ F_\eps^{\lambda}(\rho,\jpvec) + \langle \AA, \jpvec \rangle + \langle u,\rho\rangle \Big\}.
\end{align*}
Note carefully that $\bar{E}_\eps^{\lambda}$ is not the Moreau--Yosida regularization of some functional, but instead the Legendre--Fenchel conjugate of the regularized functional $F_\eps^\lambda$. Then Theorem~10 in Ref.~\citenum{KSpaper2018} can be used to relate the two energy functionals through
\begin{equation}\label{eq:E-back}
\begin{aligned}
\bar{E}^{\lambda}(u,\AA) &= \bar{E}_\eps^{\lambda}(u,\AA) + \frac{\eps}{2} \Vert (u,\AA) \Vert^2.
\end{aligned}
\end{equation}
\indent Since the infimum in the definition Eq.~\eqref{eq-defMY} is always uniquely attained at some $(\rho_\eps,\jpvec_\eps) \in X_\mathrm{R}\times Y_\mathrm{R}$ (see Sec.~2.2.3 in Ref.~\citenum{barbu2012convexity}), we can define a mapping $(\rho,\jpvec) \mapsto (\rho_\eps,\jpvec_\eps)$ that is called the proximal mapping, i.e., 
\begin{equation}\label{eq:prox}
(\rho_\eps,\jpvec_\eps) = \mathrm{prox}_{\eps F}(\rho,\jpvec)
\end{equation}
and
\begin{align*}
F_\eps^\lambda(\rho,\jpvec)  =F^\lambda(\rho_\eps, \jpvec_\eps ) + \frac 1 {2\eps} \Vert (\rho,\jpvec) - (\rho_\eps,\jpvec_\eps )  \Vert^2.
\end{align*}
The proximal mapping maps density pairs that are solutions of the regularized problem $(\rho,\jpvec) \in \partial \bar{E}_\eps^{\lambda}(u,\AA)$ back to solutions of the corresponding unregularized problem, $\mathrm{prox}_{\eps F} (\rho,\jpvec) \in \partial \bar{E}^{\lambda}(u,\AA)$ by Corollary~11 in Ref.~\citenum{KSpaper2018}.
Furthermore, the original functional $F^\lambda$ is subdifferentiable at $(\rho_\eps,\jpvec_\eps)\in X_\mathrm{R} \times Y_\mathrm{R}$, which means that the density pair $(\rho_\eps,\jpvec_\eps)$ is $v$-representable. We note that by Theorem~9 in Ref.~\citenum{KSpaper2018}
\begin{align*}
\nabla F_\eps^\lambda\xp =\eps^{-1} \mathcal J(\rho-\rho_\eps,\jpvec-\jpvec_\eps)\in \partial F^\lambda(\rho_\eps,\jpvec_\eps),
\end{align*}
where $\mathcal J: X_\mathrm{R} \times Y_\mathrm{R} \to X_\mathrm{R}^* \times Y_\mathrm{R}^*$ is the duality map that is always homogeneous (Definition~7 in Ref.~\citenum{KSpaper2018}). By Proposition~1.117 in Ref.~\citenum{barbu2012convexity}, it is further bijective in the present setting of reflexive and strictly convex Banach spaces (including their duals). Letting $\mathcal J = (\mathcal J_{X_\mathrm{R}},\mathcal J_{Y_\mathrm{R}})$, we get from $-(u,\AA)=\nabla F_\eps^\lambda\xp$ that
\begin{align*}
u & =- \eps^{-1}\mathcal J_{X_\mathrm{R}}(\rho -\rho_\eps)\in X_\mathrm{R}^*,\\
\AA &= - \eps^{-1} \mathcal J_{Y_\mathrm{R}}(\jpvec -\jpvec_\eps)\in Y_\mathrm{R}^*,
\end{align*}
which straightforwardly transforms to
\begin{equation}
\begin{aligned}\label{eq:dual-map}
\rho & =\rho_\eps - \eps\mathcal J_{X_\mathrm{R}}^{-1}(u)\in X_\mathrm{R},\\
\jpvec &= \jpvec_\eps - \eps \mathcal J_{Y_\mathrm{R}}^{-1}(\AA)\in Y_\mathrm{R}.
\end{aligned}
\end{equation}
Here the compatibility of $X_\mathrm{R}$ as given by Proposition~\ref{Prop:reflexiveCompatible} again becomes important since it implies that we can decompose $u$ as
\begin{equation}\label{eq:decompose-u}
u = \Big(u - \onehalf \vert \AA \vert^2\Big) + \onehalf \vert \AA \vert^2=: v + \onehalf |\AA|^2.
\end{equation}
Recall that for all $u \in L^{3/2}$ and $\AA\in\vec L^3$ we have
\[
u - \onehalf \vert \AA \vert^2 \in L^{3/2}.
\]

We conclude this section by discussing a Hilbert (H) space formulation.  
A direct adaptation of the approach taken in Ref.~\citenum{Kvaal2014} is to choose  
\[
X_\mathrm{H} \times Y_\mathrm{H}= L^2\times \vec L^2,
\]
i.e., the Hilbert space built up from four copies of $L^2$. The regularization presented in Ref.~\citenum{Kvaal2014} for standard DFT can then be directly applied to the four-vector $(\rho,\jpvec)$ instead of just the particle density $\rho$. We note that $\psi\in \mathcal{H}^2$ is sufficient to obtain $j^k\in L^2$, see Proposition~\ref{prop:L2}.
However, when $X\times Y=X_\mathrm{H} \times Y_\mathrm{H}$ the density and potential spaces are not compatible in the meaning of Definition~\ref{defCompatibility}. This causes a problem for the regularization procedure of the Legendre--Fenchel pair $F^\lambda$ and $\bar E^\lambda$, because we cannot decompose $u$ as in Eq.~\eqref{eq:decompose-u} any more. To see this, note that $\vert \AA \vert^2$ cannot in general be assumed to satisfy $|\AA|^2 \in X_\mathrm{H}$, which in turn would yield the desired $v \in X_\mathrm{H}$.
Consequently, we cannot obtain the physical setting of $E^\lambda(v,\AA)$ from the Moreau--Yosida setting of $\bar E^\lambda(u,\AA)$. 
This again highlights the usefulness of the more general reflexive Banach space formulation that allows a compatible choice of function spaces and makes a regularized paramagnetic CDFT possible.

\subsection{Regularized Kohn--Sham iteration scheme in CDFT}
\label{sec:KS}

We now revisit the KS iteration scheme that we previously analyzed for generic Banach spaces~\cite{KSpaper2018}. Due to the similarity to the Optimal Damping Algorithm~\cite{Cances2000,CANCES_JCP114_10616} constructed for an unregularized setting, we baptize this iteration scheme the Moreau--Yosida Kohn--Sham Optimal Damping Algorithm (MYKSODA).

Again, let $X \times Y = X_\mathrm{R} \times Y_\mathrm{R}$ so that the space of densities is compatible, reflexive, and strictly convex. The latter two properties are also fulfilled by the dual $X^*\times Y^*= X_\mathrm{R}^* \times Y_\mathrm{R}^*$.
From Sec.~\ref{sec:MY}, the ground-state problem Eq.~\eqref{eq:E-LF} can be reformulated in terms of sub- and superdifferentials
\begin{align*}
	(\rho,\jpvec) \in \partial \bar E^\lambda(u,\AA)   \Longleftrightarrow  -(u,\AA) \in \partial F^\lambda(\rho,\jpvec).
\end{align*}
The regularized functionals $F_\eps^\lambda$ and $\bar E_\eps^\lambda$ then allow the same relation with the benefit that $F_\eps^\lambda$ is now differentiable. This means we can switch from the subdifferential $\partial$ to the gradient $\nabla$ of $F_\eps^\lambda$, yet this is not permitted for $\bar E_\eps^\lambda$. \newline
\indent We set up two problems side by side, the interacting problem with $\lambda=1$ and the non-interacting reference problem corresponding to $\lambda=0$, i.e.,
\begin{align}
	\xreg &\in \partial \bar E_\eps^1\uExt \nonumber\\
	&\Longleftrightarrow  -\uExt = \nabla F_\eps^1\xreg, \label{eq:reg-problem-1} \\
	\xreg &\in \partial \bar E_\eps^0\uKS \nonumber \\
	&\Longleftrightarrow  -\uKS = \nabla F_\eps^0\xreg. \label{eq:reg-problem-0}
\end{align}
In the setting of (regularized) KS theory the external potential pair $\uExt$ is fixed and $\uKS$ is to be determined under the assumption that both problems give the same (regularized) ground-state density pair $\xreg$. We have highlighted the $\lambda$ dependence by using ``ext'' for $\lambda=1$ and by ``KS'' for $\lambda=0$ (but for the functionals $\bar E$ and $F$ we keep 1 and 0). By combining Eqs.~\eqref{eq:reg-problem-1} and \eqref{eq:reg-problem-0} we arrive at
\begin{equation*}
\begin{aligned}
	\uKS =\; &\uExt \\
	&+ \nabla F_\eps^1\xreg - \nabla F_\eps^0\xreg
\end{aligned}
\end{equation*}
from which the iteration scheme will be derived by replacing the unknown variables by sequences.
Let $(\rho_i,\vec{j}_i)$ be the element of a sequence towards the (regularized) ground-state density pair $(\rho_{\mathrm{reg}},\jpvec_{\mathrm{reg}})$, and thus the next step towards the KS potential pair $\uKS$ follows by
\begin{equation}\label{eq:KS1}
\begin{aligned}
	(u_{i+1},\AA_{i+1}) =\; &\uExt \\
	&+ \nabla F_\eps^1 \xpi - \nabla F_\eps^0 \xpi.
\end{aligned}
\end{equation}
The expression $\nabla (F_\eps^1 - F_\eps^0)$, which can be identified as a ``Hartree exchange-correlation potential'', is where the usual approximation techniques of DFT enter. Yet in the domain of CDFT the variety of tried and tested functionals is meager \cite{vignale1990g,tellgren2014non,furness2015current} compared to the wealth of options in conventional DFT \cite{burke2012perspective}.\newline
\indent The second major step in the iteration scheme is then the solution of the non-interacting reference system (instead of the computationally difficult interacting problem) which selects a ground-state density pair $(\rho_{i+1},\jpvec_{i+1})$ corresponding to the approximated KS potential pair $(u_{i+1},\AA_{i+1})$ that then serves as the next input in Eq.~\eqref{eq:KS1}
\begin{equation}\label{eq:KS2}
	(\rho_{i+1},\jpvec_{i+1}) \in \partial \bar E_\eps^0(u_{i+1},\AA_{i+1}).
\end{equation}
That this (super)differential is indeed always non-empty, meaning that the associated ground-state problem has at least one solution, can be shown by using the result of finiteness of $\bar E^\lambda$ from Lemma \ref{lemma:Efiniteness} (see the proof of Theorem 12 in Ref.~\citenum{KSpaper2018}). \newline
\indent The iteration stops when 
\[
\uExt = -\nabla F_\eps^1\xpi,
\]
which means that $\xpi$ solves the original interacting ground-state problem with external potential pair $\uExt$. This is so because this condition is equivalent to 
\[
(u_{i+1},\AA_{i+1}) = -\nabla F_\eps^0 \xpi
\]
by Eq.~\eqref{eq:KS1}, and the next step given by Eq.~\eqref{eq:KS2} would just yield the same density pair $(\rho_{i+1},\jpvec_{i+1})=\xpi$ again.
In such a case, or if we decide we have converged close enough to the supposed ground-state density pair $(\rho_{\mathrm{reg}},\jpvec_{\mathrm{reg}})$ of the regularized problem, a fixed relation between this solution and the solution of the unregularized problem is established by Eq.~\eqref{eq:dual-map}. \newline
\indent The question of convergence of the sequences $\{ \xpi\}$ and $\{(u_i,\AA_i)\}$ with respect to the Banach space topologies of $X_\mathrm{R}\times Y_\mathrm{R}$ and  $X_\mathrm{R}^*\times Y_\mathrm{R}^*$ is immediately raised. The authors have answered this in Ref.~\citenum{KSpaper2018}, but only in a weak sense. 
More precisely, the associated energy sequence 
\[
F^1_\eps \xpi + \langle \AA_\mathrm{ext},\jpvec_i \rangle + \langle u_\mathrm{ext},\rho_i\rangle
\]
can be guaranteed to converge to \emph{some} value larger or equal the correct value of the regularized energy functional $\bar E^1_\eps \uExt$. Arguably this is not what one expects from a well-formed KS iteration, where convergence to the correct ground-state density pair is the obvious aim. Further, such convergence in terms of energy can only be guaranteed if an additional step is inserted into the scheme consisting of Eqs.~\eqref{eq:KS1} and \eqref{eq:KS2}, coined ``optimal damping'' \cite{Cances2000,Cances2000b,Cances2001}, that limits the step of the new density in such a way that the energy value assuredly decreases.
It should be noted that a very recent development~\cite{penz2019guaranteed} treating the finite dimensional case finally proves full convergence of the exact KS iteration.

\subsection{Weak-type convergence of MYKSODA}
\label{sec:MYKSODA}

We now have all the ingredients we need for an application of Theorem~12 in Ref.~\citenum{KSpaper2018}, which constitutes the main theoretical result of this work.

\begin{theorem}\label{th:KS}
For the density spaces $X_\mathrm{R} \times Y_\mathrm{R}  = L^3 \times \vec L^{3/2}$, and the corresponding dual for potential pairs $X_\mathrm{R}^* \times Y_\mathrm{R}^* = L^{3/2} \times \vec L^3$, a well-defined KS iteration can be set up for the energy functional $\bar E^\lambda$ from Eq.~\eqref{eq:E-functional}. It starts with a fixed potential pair $\uExt \in X_\mathrm{R}^* \times Y_\mathrm{R}^*$ by setting $(u_1,\AA_1) = \uExt$ and selecting $(\rho_1,\jpvec_1) \in \partial \bar E^0_\eps\uExt$. Then iterate $i=1,2,\ldots$ according to:
\begin{framed}  
	\begin{enumerate}[(a),leftmargin=.65cm]
		\item Set 
		\begin{align*}
		(u_{i+1},\AA_{i+1}) = \;&\uExt \\&+ \nabla F_\eps^1\xpi - \nabla F_\eps^0 \xpi
		\end{align*}
		and stop if $$(u_{i+1},\AA_{i+1}) = -\nabla F_\eps^0\xpi = \uKS.$$
		\item Select $ (\rho_{i+1}',\jpvec_{i+1}') \in \partial \bar E_\eps^0(u_{i+1},\AA_{i+1})$.
		\item Choose $t_i\in (0,1]$ maximally such that for
		\begin{align*}
		(\rho_{i+1},\jpvec_{i+1}) =& \xpi \\ &+ t_i\big(   (\rho_{i+1}',\jpvec_{i+1}') - \xpi  \big)    
		\end{align*}
		one still has
		\begin{equation}\label{eq:oda-step}
		\begin{aligned}
		\frac{\mathrm{d}}{\mathrm{d} t_i}&\left[ F_\eps^1(\rho_{i+1},\vec j_{i+1}) + \langle \uExt, (\rho_{i+1},\vec j_{i+1}) \rangle \right] \\
		&= \langle \nabla F^1_\eps(\rho_{i+1},\jpvec_{i+1}) +\uExt, \\
		&\quad\quad (\rho_{i+1}',\jpvec_{i+1}') - (\rho_i,\jpvec_i) \rangle \leq 0. 
		\end{aligned}
		\end{equation}
	\end{enumerate}
\end{framed}

	Then the strictly descending sequence 
	\begin{align*}
	\big\{ F^1_\eps \xpi + \langle \AA_\mathrm{ext},\jpvec_i\rangle + \langle u_\mathrm{ext},\rho_i\rangle \big\}_i 
	\end{align*}
	converges as a sequence of real numbers to
	\begin{align*}
	e_\eps\uExt &= \inf_i \big\{ F^1_\eps \xpi + \langle \AA_\mathrm{ext},\jpvec_i\rangle + \langle u_\mathrm{ext},\rho_i\rangle \big\} \\
	            & \geq \bar E^1_\eps\uExt.
	\end{align*}
	Thus,
	\[
	e_\eps\uExt + \frac{\eps}{2} \Vert (u_\mathrm{ext}, \AA_\mathrm{ext}) \Vert^2
	\]
	is an upper bound for the ground-state energy $\bar E^1 \uExt$.
\end{theorem}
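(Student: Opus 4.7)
The plan is to reduce the statement to the abstract MYKSODA convergence result (Theorem~12) from Ref.~\citenum{KSpaper2018}, applied with the specific Banach-space choice $X_\mathrm{R}\times Y_\mathrm{R} = L^3\times \vec L^{3/2}$. That abstract theorem requires a reflexive, strictly convex density space paired with its dual as the potential space, a functional $F^\lambda$ that is closed convex on that space, and finiteness of its Legendre--Fenchel conjugate $\bar E^\lambda$. My first task is to check each of these ingredients in the paramagnetic CDFT setting: reflexivity and (uniform, hence strict) convexity of $L^3$ and $\vec L^{3/2}$ follow from Hanner's inequality; compatibility of the pair $X_\mathrm{R}\times Y_\mathrm{R}$ is furnished by Proposition~\ref{Prop:reflexiveCompatible}; closedness and convexity, i.e.\ $F^\lambda \in \Gamma_0(X_\mathrm{R}\times Y_\mathrm{R})$, is Lemma~\ref{lemma:Gamma}; and finiteness of $\bar E^\lambda$ on $X_\mathrm{R}^*\times Y_\mathrm{R}^*$ follows from Lemma~\ref{lemma:Efiniteness} together with Remark~\ref{rmk:finitXRYR}.

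Next, I would verify that every step of the iteration is well posed. In step~(a), the Moreau--Yosida regularization guarantees that $F_\eps^1$ and $F_\eps^0$ are everywhere Gateaux differentiable with $\nabla F^\lambda_\eps\xpi \in X_\mathrm{R}^* \times Y_\mathrm{R}^*$, so the update formula for $(u_{i+1},\AA_{i+1})$ lives in the admissible potential space; compatibility together with decomposition~\eqref{eq:decompose-u} means we may freely pass back to a physical potential pair. In step~(b), the superdifferential $\partial \bar E^0_\eps(u_{i+1},\AA_{i+1})$ is nonempty because $\bar E^0_\eps$ is the Legendre--Fenchel conjugate of the closed convex $F^0_\eps$ which is finite and subdifferentiable on $X_\mathrm{R}\times Y_\mathrm{R}$; this is precisely the input furnished by Lemma~\ref{lemma:Efiniteness} and Remark~\ref{rmk:finitXRYR} for the noninteracting case $\lambda=0$. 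Step~(c) is well defined because $t\mapsto F^1_\eps\bigl(\xpi+t((\rho_{i+1}',\jpvec_{i+1}')-\xpi)\bigr) + \langle \uExt,\xpi+t((\rho_{i+1}',\jpvec_{i+1}')-\xpi)\rangle$ is convex in $t$, hence its derivative is monotone and a maximal $t_i\in(0,1]$ rendering it nonpositive exists.

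With the framework in place, strict monotone decrease of the energy sequence follows from step~(c): the optimal damping selection guarantees that replacing $\xpi$ by $(\rho_{i+1},\jpvec_{i+1})$ decreases the functional $F^1_\eps + \langle\uExt,\cdot\rangle$, and the decrease is strict unless $-\nabla F^1_\eps\xpi = \uExt$, which is precisely the stopping criterion and characterizes $\xpi$ as a minimizer. Being decreasing and bounded below by $\bar E^1_\eps\uExt$, the sequence converges to some limit $e_\eps\uExt \geq \bar E^1_\eps\uExt$. To conclude the upper bound on the physical ground-state energy, I would invoke Eq.~\eqref{eq:E-back}, $\bar E^1\uExt = \bar E^1_\eps\uExt + \tfrac{\eps}{2}\Vert \uExt\Vert^2$, which yields $\bar E^1\uExt \leq e_\eps\uExt + \tfrac{\eps}{2}\Vert\uExt\Vert^2$.

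The main obstacle, and the reason this statement is genuinely a paramagnetic-CDFT result rather than a mere quotation, is ensuring that the finiteness and subdifferentiability inputs required for the abstract theorem really do survive in this setting. The delicate point is that compatibility of the chosen density space with the dual potential space is essential in two places: it is what lets us pass between the convex variable $u$ and the physical $v = u - \tfrac12|\AA|^2$ so that the physical energy functional $\bar E^\lambda$ is defined on the same dual space on which we run the iteration, and it is what allows the bounds of Lemma~\ref{lemma:Efiniteness} and Remark~\ref{rmk:finitXRYR} to carry through to the regularized functionals. Once these structural facts are secured, the convergence proof reduces to a direct transcription of the abstract MYKSODA argument in Ref.~\citenum{KSpaper2018}.
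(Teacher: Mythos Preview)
Your proposal is correct and follows essentially the same route as the paper: verify that $X_\mathrm{R}\times Y_\mathrm{R}=L^3\times\vec L^{3/2}$ and its dual are reflexive and strictly convex, that compatibility (Proposition~\ref{Prop:reflexiveCompatible}) allows the concave $\bar E^\lambda$ to be set up and linked to $F^\lambda$ by Legendre--Fenchel duality, and that $\bar E^0$ is finite on the dual space (Lemma~\ref{lemma:Efiniteness}, Remark~\ref{rmk:finitXRYR}); then invoke Theorem~12 of Ref.~\citenum{KSpaper2018} and conclude the energy bound via Eq.~\eqref{eq:E-back}. Your additional explicit verification that each step~(a)--(c) is well posed and your sketch of the descent argument are contained in the abstract result being cited, so they are not extra ingredients but a faithful unpacking of what that theorem already delivers.
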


\begin{proof}
To be able to apply Theorem 12 from Ref.~\citenum{KSpaper2018} we have to make sure that $X_\mathrm{R} \times Y_\mathrm{R}$ and $X_\mathrm{R}^* \times Y_\mathrm{R}^*$ are reflexive and strictly convex, the non-interacting energy functional $\bar E^0$ needs to be finite on all of $X_\mathrm{R}^* \times Y_\mathrm{R}^*$, and $F^\lambda,\bar E^\lambda$ must form a convex-concave pair linked by the Legendre--Fenchel transformation. Now, Proposition \ref{Prop:reflexiveCompatible} shows that the chosen density spaces are indeed reflexive, all $L^p$ with $1<p<\infty$ are strictly convex anyway, and also that they are compatible. Compatibility gives that the energy functional can be transformed to a concave $\bar E^\lambda$ by Eq.~\eqref{eq:E-transform} that links to a convex Lieb functional $F^\lambda$ by Eq.~\eqref{eq:F-LF} (Legendre--Fenchel transformation). Then Lemma~\ref{lemma:Efiniteness} and Remark~\ref{rmk:finitXRYR} prove that $\bar E^0$ is indeed finite on $X_\mathrm{R}^* \times Y_\mathrm{R}^*$. 
With the results from Theorem 12 in Ref.~\citenum{KSpaper2018} we get a strictly decreasing and converging sequence 
\[
\big\{ F^1_\eps \xpi + \langle \AA_\mathrm{ext},\jpvec_i\rangle + \langle u_\mathrm{ext},\rho_i\rangle \big\}_i 
\]
with the given lower bound. The transformed energy bound follows directly from Eq.~\eqref{eq:E-back}.
\end{proof}

\begin{remark}
Any candidate for a possible ground-state density pair from the iteration defined in Theorem~\ref{th:KS} can be transformed to a solution of the corresponding ``physical'' unregularized problem with the help of Eq.~\eqref{eq:dual-map}. But it has not been proven that the iteration actually converges in terms of densities and potentials as elements of the given Banach spaces and dual spaces \emph{or} that if it converges, it actually reaches the ground-state density pair $\xreg$ and the associated KS potential pair $\uKS$.
\end{remark}

\begin{remark}
We already discussed $\uExt = -\nabla F_\eps^1 \xpi$ as a stopping condition for the iteration before. To have such a stopping condition is important for at least a possible convergence to the correct ground-state density pair that is then a fixed point. Note that we still have the appearance of a whole set of ground-state density pairs in step (b), signifying that degeneracy is admitted. Thus, it is more beneficial to look at the sequence of potential pairs $(u_i,\AA_i)$ that, if the stopping condition is eventually reached, gives the correct KS potential for \emph{some} ground-state densities. This is a reason why it was important to switch to differentiable functionals $F_\eps^\lambda$ through regularization: To be able to define a unique sequence of potentials that can converge to the KS potential. See also Ref.~\citenum{LammertBivariate} where the traditional iteration in density space is supplemented with a \emph{bivariate} formalism.
\end{remark}

\begin{remark}
Note that step (c) in the KS iteration scheme above corresponds to a line-search between the points $\xpi$ and $(\rho_{i+1}',\jpvec_{i+1}')$. If more solutions of the regularized, non-interacting reference system from step (b) are taken into account (degeneracy), then step (c) gets generalized to a search over a convex polytope.
\end{remark}

\subsection{Kohn--Sham iteration scheme for two-electron systems}
\label{sec:KSconvergence}

As only a partial convergence result is presently available for the KS algorithm---leaving open the possibility that it does not always converge to the right energy and potential---it is interesting to consider a case where non-interacting $N$- and $v$-representability issues pose a challenge for the algorithm.
Consider a formulation of paramagnetic CDFT for singlet ground states and restrict attention to a two-electron system. Then the (unregularized) non-interacting KS system is represented by a single orbital and its vorticity,
\begin{equation*}
\vec{\nu}_{\mathrm{KS}} =
\nabla\times\frac{\vec{j}_{\mathrm{KS}}}{\rho_{\mathrm{KS}}},
\end{equation*}
vanishes if differentiability is assumed. For the small set of KS potentials that yield ground-state degeneracies, this can be circumvented by allowing the KS system to be represented by a mixed state. However, for most KS potentials, the KS ground state is unique and has trivial vorticity. On the other hand, for most external potentials, the correlated ground state of the interacting system has a nontrivial vorticity. Hence, most of the ground-state densities are not non-interacting ($\lambda=0$) $N$-representable.

This situation poses an interesting challenge for the KS iteration scheme as the vorticity of the KS system cannot develop gradually. Until the algorithm has constructed potentials that yield an exact ground-state degeneracy, the vorticity is trivial, and it is not clear how ``visible'' the corresponding degrees of freedom are to the optimization algorithm. Moreau--Yosida regularization alleviates the challenge somewhat, since the relevant densities have contributions from the potentials, complicating the non-interacting $N$-representability conditions. With regularization, the relevant vorticity that should reproduce the interacting system using Eq.~\eqref{eq:dual-map} is
\begin{equation*}
    \vec{\nu}_{\mathrm{KS,reg}} =
\nabla\times\frac{\vec{j}_{\mathrm{KS}} - \eps \mathcal{J}_{Y_\mathrm{R}}^{-1}(\AA^{\mathrm{KS}})}{\rho_{\mathrm{KS}} - \eps \mathcal{J}_{X_\mathrm{R}}^{-1}(u^{\mathrm{KS}})}.
\end{equation*}
If counterexamples that prevent a full convergence proof exist at all, the type of system sketched above is a promising candidate for further analysis.

\section{Numerical application to quantum ring}
\label{sec:numerApp}

The theory of regularized CDFT can be directly applied to a one-dimensional quantum ring. Although this is a toy model, it is sufficiently rich to contain simple formal analogues of many aspects of CDFT for a three-dimensional spatial domain. We limit attention to two-electron systems in singlet spin states. The Hamiltonian is given by
\begin{align*}
    H^{\lambda}(v,A) &= \frac{1}{2} \sum_{j=1}^2 \Big(-\frac{\i}{R} \frac{\partial}{\partial \theta_j} + A(\theta_j) \Big)^2 \\
    &\quad + \sum_{j=1}^2 v(\theta_j) + \lambda W(\theta_1,\theta_2),
\end{align*}
where $R$ is the radius of the ring and the potentials $v$ and $A$ as well as the electron-electron repulsion $W$ are considered functions of the angular position along the ring. Note that gradients and the vector potential only have tangential components and may therefore effectively be treated as scalars. \newline 
\indent Because of the limitation to singlet states, the spatial wave function $\psi(\theta_1,\theta_2) = \psi(\theta_2,\theta_1)$ must be symmetric. Any uncorrelated state, e.g., a KS state, takes the form $\psi(\theta_1,\theta_2) = \phi(\theta_1) \phi(\theta_2)$ and is defined by a single orbital $\phi$. The densities that arise from such an uncorrelated, single-orbital $\phi(\theta) = \sqrt{\rho(\theta)} \, e^{i\chi(\theta)}$ state must satisfy
\begin{equation*}
    R \int_0^{2\pi} \frac{j(\theta)}{\rho(\theta)} \d\theta = \int_0^{2\pi} \nabla\chi(\theta) \d\theta = 2\pi m,
\end{equation*}
with $m$ an integer if $\rho>0$ everywhere. Note that $\chi(\theta+2\pi) = \chi(\theta) + 2\pi m$ is in general a multivalued phase function.
By contrast, a correlated state can well give rise to a fractional value of $m$. This is the quantum ring analogue of the fact that vorticity is trivial for single-orbital systems in a three-dimensional spatial domain. \newline
\indent Next, in order to study regularized CDFT numerically, we discretize the quantum ring into $N_G$ uniformly spaced grid points. The approach described below is implemented in a Matlab program named {\sc MYring}~\cite{MYringProgram}. We replace the Laplacian by the standard second-order finite difference expression
\begin{equation*}
    \frac{\partial^2}{\partial \theta^2} \phi(\theta) \leftrightarrow \frac{\phi(\theta_{j+1}) - 2\phi(\theta_j) + \phi(\theta_{j-1})}{h^2},
\end{equation*}
where $h = 2\pi R/N_G$ is the grid spacing. The paramagnetic term is discretized using the symmetric first-order expression,
\begin{equation*}
    \frac{\partial}{\partial \theta} \phi(\theta) \leftrightarrow \frac{\phi(\theta_{j+1}) - \phi(\theta_{j-1})}{2h}.
\end{equation*}
Defining the particle density and current density at grid point $k$ by
\begin{align*}
    \rho_k & = h \sum_l |\psi(\theta_k,\theta_l)|^2,
            \\
    j_k & = -\frac{\i}{2} \, h \sum_l \psi(\theta_k,\theta_l)^* \frac{\psi(\theta_{k+1},\theta_l)-\psi(\theta_{k-1},\theta_l)}{2h} + \mathrm{c.c.},
\end{align*}
we can define a constrained-search functional as well as linear pairings between $\rho$ and $u = v + A^2/2$ as well as $j$ and $A$,
\begin{align*}
    \langle u, \rho \rangle & = h \sum_k u_k \, \rho_k,
           \\
    \langle A, j \rangle & = h \sum_k A_k \, j_k.
\end{align*}
The densities and potentials may all be regarded as vectors in $\mathbb{R}^{N_G}$. Because all norms in finite dimensions are mathematically equivalent, we can choose to endow all function spaces with the same Euclidean $l^2(N_G)$ norm without losing compatibility. However, the norms may not be \emph{numerically} equivalent. Moreover, to connect to the continuum limit when $N_G \to \infty$, it is likely that one needs more carefully chosen norms. This is left for future studies. \newline
\indent The grid discretization makes it trivial to construct compatible finite-dimensional function spaces. However, this is not true in arbitrary basis expansions of density pairs and potential pairs. Unless the respective basis sets have special properties, compatibility is in general lost. \newline
\indent By solving the discretized Schr\"odinger equation, we then obtain the ground-state energy and the regularized energy
\begin{equation*}
    \bar{E}^{\lambda}_{\eps}(u,A) = \bar{E}^{\lambda}(u,A) - \frac{\eps}{2} \|u\|_2^2 - \frac{\eps}{2} \|A\|_2^2.
\end{equation*}
The universal density functional can be computed from the Lieb variational principle
\begin{equation}
    \label{eq:LiebVarQRing}
    F^{\lambda}_{\eps}(\rho,j) = \sup_{u,A} G^{\lambda}_{\eps}(u,A;\rho,j),
\end{equation}
with $G^{\lambda}_{\eps}(u,A;\rho,j) = \bar{E}^{\lambda}_{\eps}(u,A) - \langle u, \rho \rangle - \langle A, j \rangle$. We have found that a cutting-plane bundle method for convex optimization~\cite{CHENEY_NM1_253,KELLEY_JSIAM8_703} provides robust, though occasionally very slow convergence to the maximum value. In more detail, our implemented method maintains a ``bundle'' of data $(g_l,u_l,A_l,\sigma_l,k_l)$ from previous iterations. The bundle contains the function value $g_l = G^{\lambda}_{\eps}(u_l,A_l;\rho,j)$ and a supergradient $(\sigma_l,k_l)$ evaluated at $(u_l,A_l)$. Then a model function is defined by all the tangent planes encoded in the bundle,
\begin{equation}
    Q_l(u,A) = g_l + \langle u - u_l, \sigma_l \rangle +  \langle A - A_l, k_l \rangle.
\end{equation}
The next sample point $(u_{l+1},A_{l+1})$ is determined by maximizing the model function subject to a trust region constraint (to guard against $Q_l$ being unbounded, as may happen in the first iterations). \newline
\indent The stopping criterion requires care. As mentioned above, because the Moreau--Yosida regularization is only applied to $F^{\lambda}$, the energy functional $\bar{E}^{\lambda}_{\eps}$ is not more differentiable than the original $\bar{E}^{\lambda}$. Hence, there is no guarantee that $G^{\lambda}_{\eps}$ is differentiable with respect to the potentials at the maximum. This is particularly true for the KS potentials at $\lambda=0$, where $N$-representability constraints become more visible. Hence, it is not feasible to rely on vanishing (super)gradients as a stopping criterion for the optimization. This is connected to ground-state degeneracy and can be diagnosed by computing the energy gap to the first excited state.

\subsection{Kohn--Sham potentials from the Lieb variational principle}

We consider a discretization with $N_G=30$ grid points and set the electron-electron interaction to
\begin{equation*}
    W(\theta_1,\theta_2) = 3 \sqrt{1 + \cos(\theta_1-\theta_2)}.
\end{equation*}
We choose the external potentials
\begin{align*}
    v_\mathrm{ext}(\theta) & = \cos(\theta), \\
    A_\mathrm{ext}(\theta) & = 0.6,
\end{align*}
in order to obtain a non-trivial example that is nonetheless simple to specify. The external potentials are visualized in Fig.~\ref{fig:VcosUnregPot}.
The resulting Hamiltonian $H^{1}(v_\mathrm{ext},A_\mathrm{ext})$ has a highly correlated ground state with densities $(\rho,j)$ displayed in Fig.~\ref{fig:VcosUnregDens}. Performing maximization in the Lieb variation principle (Eq.~\eqref{eq:F-LF} or \eqref{eq:LiebVarQRing}) defining $F_{\eps=0}^{\lambda}(\rho,j)$ yields KS potentials $(u_\mathrm{KS},A_\mathrm{KS})$ as a by-product, visualized in Fig.~\ref{fig:VcosUnregPot}. The Hamiltonian $H^0(v_\mathrm{KS},A_\mathrm{KS})$, with $v_\mathrm{KS} = u_\mathrm{KS} - (A_\mathrm{KS})^2/2$, has a two-fold ground-state degeneracy and one of these ground-state densities is shown in Fig.~\ref{fig:VcosUnregDens}.
The vanishing gap is seen in Fig.~\ref{fig:VcosUnregEkink} and results in a non-differentiable kink in the ground-state energy $\bar{E}_{\eps=0}^0(u_\mathrm{KS},A_\mathrm{KS})$. The interacting density pair $(\rho,j)$ is a supergradient at this non-differentiable point, but it is neither a left- nor a right-derivative. Due to the limitation that our implementation is limited to pure states, and furthermore that the the choice of degenerate eigenvector basis is not optimized, it is seen in Fig.~\ref{fig:VcosUnregDens} that the interacting ground-state density pair $(\rho,j)$ is not reproduced exactly by the KS ground state. In general, exact reproduction requires mixed states.

\begin{figure}
\includegraphics[width=0.96\linewidth]{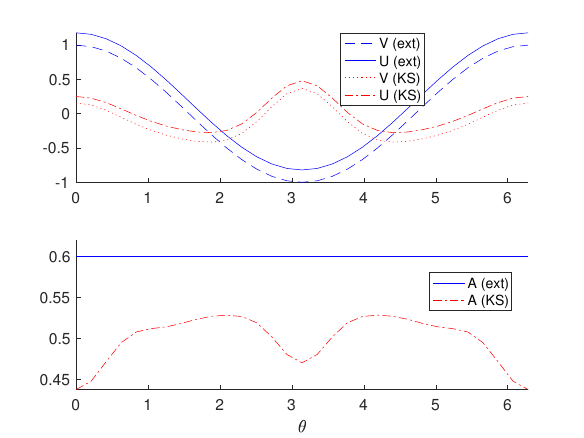}
\caption{External potentials and KS potentials in the unregularized case ($\eps=0$).}
\label{fig:VcosUnregPot}
\end{figure}

\begin{figure}
\includegraphics[width=0.96\linewidth]{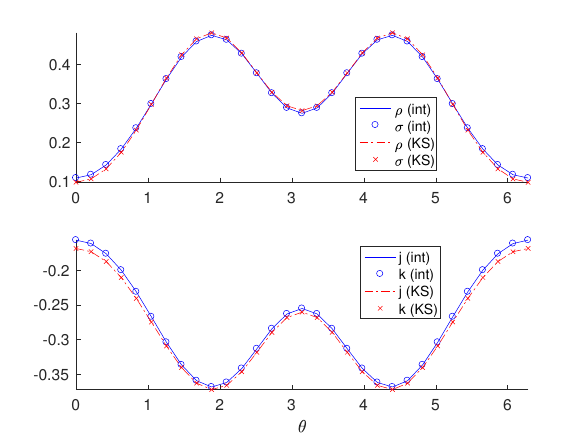}
\caption{The unregularized ground-state density pair $(\rho,j)$ for the correlated, interacting system subject to the external potentials together with the density pair $(\rho_\mathrm{KS},j_\mathrm{KS})$ for the uncorrelated KS system. Because $\eps=0$, the regularized density pair $(\sigma=\rho-\eps u_\mathrm{ext}, k=j-\eps A_\mathrm{ext})$ trivially coincides with unregularized density pair. Note that $(\rho,j)$ are nearly reproduced by the KS density pair $(\rho_\mathrm{KS}, j_\mathrm{KS})$, but failure of non-interacting $N$-representability prevents an exact match.}
\label{fig:VcosUnregDens}
\end{figure}

\begin{figure}
\includegraphics[width=0.96\linewidth]{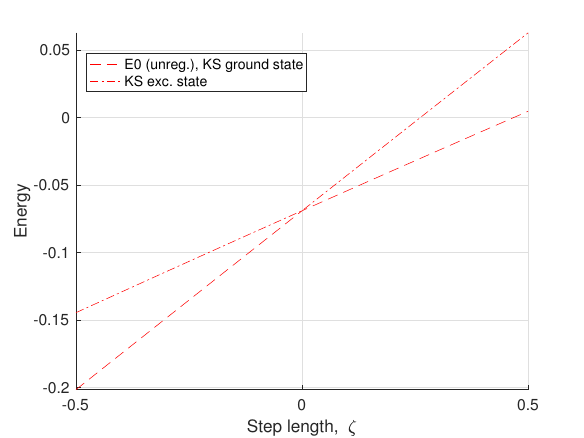}
\caption{The unregularized KS ground-state energy $\bar{E}^{0}_{\eps}(u_\mathrm{KS}+\zeta h\sigma,A_\mathrm{KS}+\zeta hk)$ and the first excited state as a function of the step length $\zeta$. The non-differentiable kink at $\zeta=0$ arises from a level crossing and the reference density pair from the interacting system corresponds to a particular supergradient at this kink.}
\label{fig:VcosUnregEkink}
\end{figure}

Next we illustrate the regularized setting by taking $\eps = 0.1$. This relatively large regularization parameter is used to make the effects of regularization noticeable. It is now the pair $(\sigma = \rho-\eps u_\mathrm{ext}, k = j - \eps A_\mathrm{ext})$ that takes over %most of 
the role played by the density pair in the unregularized setting. In particular, the Lieb variation principle now yields a KS potential pair $(u_\mathrm{KS},A_\mathrm{KS})$ such that $(\sigma_\mathrm{KS} = \rho_\mathrm{KS} - \eps u_\mathrm{KS}, k_\mathrm{KS} = j_\mathrm{KS} - \eps A_\mathrm{KS})$ coincides with the density pair $(\sigma,k)$, but $(\rho_\mathrm{KS},j_\mathrm{KS}) \neq (\rho,j)$. Hence, the KS potentials shown in Fig.~\ref{fig:VcosRegPot} are different from those in the unregularized setting (Fig.~\ref{fig:VcosUnregPot}). The resulting densities are shown in Fig.~\ref{fig:VcosRegDens}.

\begin{figure}
\includegraphics[width=0.96\linewidth]{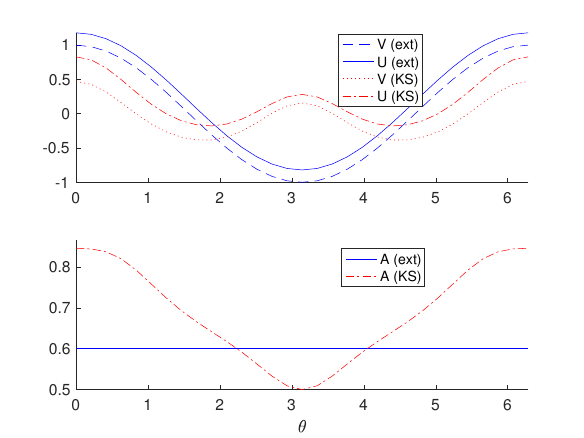}
\caption{External potentials and KS potentials in the case of Moreau--Yosida regularization with parameter value $\eps=0.1$.}
\label{fig:VcosRegPot}
\end{figure}

\begin{figure}
\includegraphics[width=0.96\linewidth]{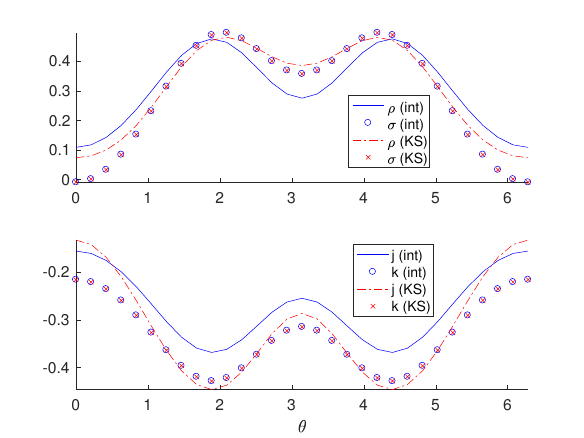}
\caption{The ground-state density pair $(\rho,j)$ for the correlated, interacting system subject to the external potentials together with the density pair $(\rho_\mathrm{KS},j_\mathrm{KS})$ for the uncorrelated KS system. The regularized density pair $(\sigma=\rho-\eps u_\mathrm{ext}, k=j-\eps A_\mathrm{ext})$ is very nearly reproduced by the regularized KS density pair $(\sigma_\mathrm{KS}=\rho_\mathrm{KS}-\eps u_\mathrm{KS}, k_\mathrm{KS}=j_\mathrm{KS}-\eps A_\mathrm{KS})$, but failure of non-interacting $N$-representability prevents an exact match.}
\label{fig:VcosRegDens}
\end{figure}

\subsection{Kohn--Sham potentials from the iterative algorithm}

In the previous section, the KS potentials were determined by first solving for correlated ground-state wave function of the interacting system, then constructing its densities, and finally plugging these densities into the Lieb variation principle. The iterative KS algorithm discussed in Sec.~\ref{sec:KS} above provides an alternative that does not require any \emph{a priori} information of the correlated ground state or its associated density. For simplicity, we have implemented the pure-state version of this algorithm, enabling us to see the consequences when a density pair is not representable by a pure ground state. The linesearch for the interpolation parameter $t$ was implemented in the following way:

\begin{itemize}
\item[(i)] Successively try $t = 1, \tfrac{1}{2}, \tfrac{1}{4}, \tfrac{1}{8}, \ldots$, until the criterion from the optimal damping step Eq.~\eqref{eq:oda-step} is fulfilled.

\item[(ii)] If already $t=1$ fulfills Eq.~\eqref{eq:oda-step}, then use this value. Otherwise, let $t=2^{-k}$ be the first parameter value such that Eq.~\eqref{eq:oda-step} holds and estimate the critical $t$ value by linear interpolation between $t=2^{-k+1}$ and $2^{-k}$. If the criterion is still not fulfilled at this $t$, perform another linear interpolation and choose the best of the sampled values.
\end{itemize}

The computation of gradients $\nabla F^1_\eps(\rho,j)$ is done using the Lieb variation principle, with a maximum of 300 bundle optimization iterations and a convergence criterion of $10^{-5}$ for stopping earlier. When there is a degenerate ground state, the gradient criterion does not apply and we instead test for a small gap and stagnated bundle iterations. In cases of numerically very small, but non-zero gap between the ground state and first excited state, our implementation may fail to obtain an adequate solution from the (pure-state) Lieb variation principle. As the algorithm was not formulated to account for such failures, the energy seen in the KS iterations may not be bounded from below by the true energy $\bar{E}^1(u_\mathrm{ext},A_\mathrm{ext})$, unless we override the reference density pair and instead use the actual density pair returned from the Lieb optimization. \newline
\indent Continuing with the same numerical example as in the previous section, we ran the KS iteration for different values of the regularization parameter. In the unregularized case, the consequences of failure of pure-state representability, both for the KS system and interacting systems corresponding to trial densities encountered in the course of the iterations, prevented a meaningful result. With Moreau--Yosida regularization, we were able to converge within the expected accuracy, given the finite precision of our implementation of the Lieb variation principle. In Fig.~\ref{fig:VcosKSenergy} the convergence of the energy difference,
\begin{align*}
    \Delta E_i  &= F^1_{\eps}(\rho_i,j_i) + \langle u_\mathrm{ext},\rho_i\rangle \\
    &\quad + \langle A_\mathrm{ext},j_i\rangle - \bar{E}^1_{\eps}(u_\mathrm{ext},A_\mathrm{ext}),
\end{align*}
is shown for four different values, $\eps=0.05, 0.1, 0.2, 0.3$, of the regularization parameter. Fig.~\ref{fig:VcosKSgradnorm} shows the convergence of the gradient norm,
\begin{equation*}
  \left\| (u_\mathrm{ext}, A_\mathrm{ext}) + \nabla F^1_{\eps}(\rho_i,j_i) \right\|_2,
\end{equation*}
which vanishes when the ground-state density of the interacting system has been reproduced. Although not encountered in the example studied here, small numerical inaccuracies especially in the Lieb variation principle lead to occasional small increases of the energy. The convergence is slow compared to experience with standard algorithms, such as Pulay's DIIS~\cite{PULAY_JCC3_556}, and approximate density functionals, as these result in quadratic convergence in favorable cases. However, most standard algorithms also lack formal convergence guarantees and have, for practical reasons, never been tested with the exact functional. 
An exception is the work by Wagner \emph{et al.}\ that did explore convergence of the exact functional using an algorithm applied to one-dimensional systems~\cite{Wagner2013,WagnerFollowUp}. In Ref.~\citenum{WagnerFollowUp} an adaptive choice of the damping (mixing) parameter was investigated, including discussions on line search and Hermite spline fit to the energy as a function of the damping parameter. (It is interesting to note that they use the curvature of the energy as information. In the regularized setting where derivatives are guaranteed to exist, the curvature is a key ingredient in the convergence proof of Ref.~\citenum{penz2019guaranteed}.) Furthermore, their study of an optimal damping parameter demonstrated numerically that convergence is more difficult for strongly correlated systems.\newline 
\indent The present work is the first time a KS vector potential, corresponding to an exact CDFT functional, is calculated using a KS iteration scheme. As expected, the convergence in Figs.~\ref{fig:VcosKSenergy} and \ref{fig:VcosKSgradnorm} is faster for larger values of the regularization parameter. This is partly due to the fact that the unregularized case features a KS system with vanishing gap and partly due to the increased regularity of the problem for larger $\eps$. 
\begin{figure}
\includegraphics[width=0.96\linewidth]{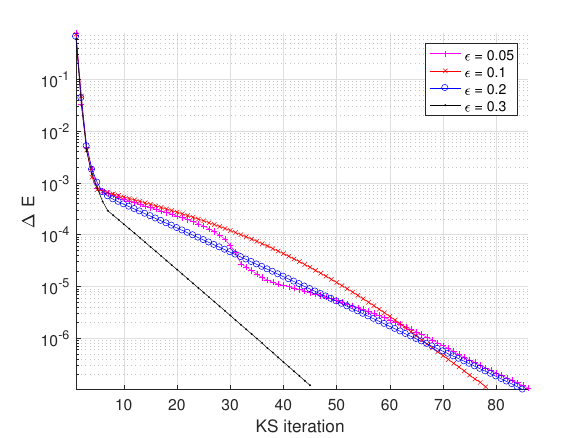}
\caption{Energy convergence of KS iterations for different values of the regularization parameter.}
\label{fig:VcosKSenergy}
\end{figure}

\begin{figure}
\includegraphics[width=0.96\linewidth]{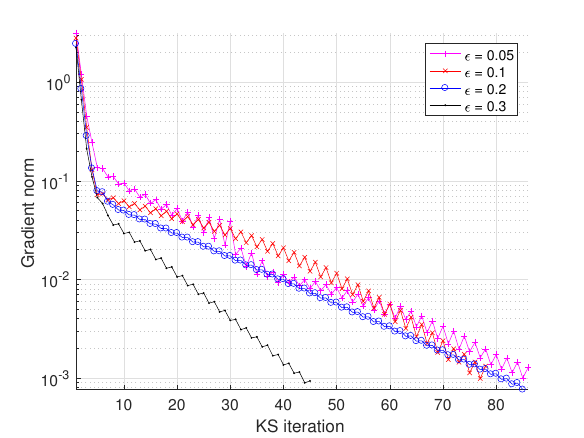}
\caption{Convergence of the gradient norm in the KS iterations for different values of the regularization parameter.}
\label{fig:VcosKSgradnorm}
\end{figure}

\begin{figure}
\includegraphics[width=0.96\linewidth]{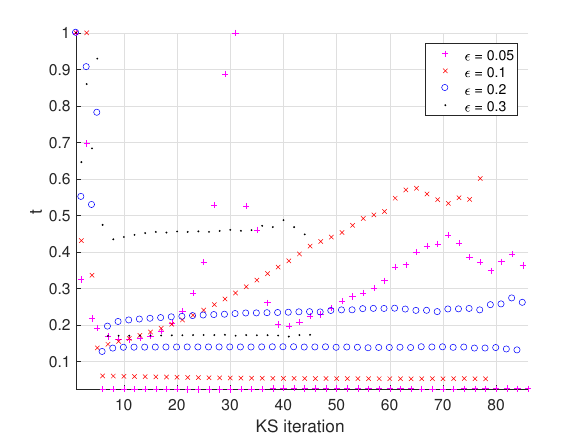}
\caption{Calculated linesearch step $t$ in the KS iterations.}
\label{fig:VcosKStparam}
\end{figure}

\begin{figure}
\includegraphics[width=0.96\linewidth]{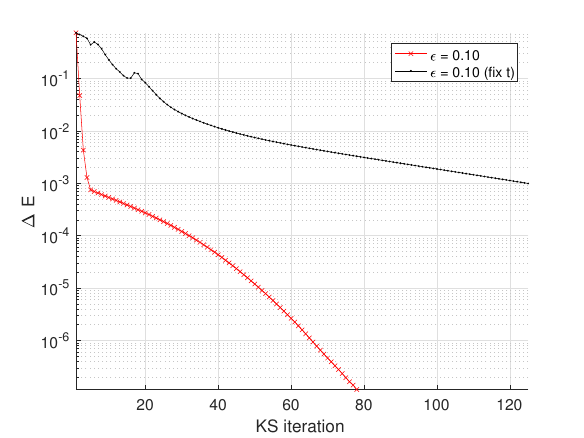}
\caption{Comparison of KS iterations with maximal $t$ determined in each iteration and with fixed $t=0.05$.}
\label{fig:ConstTEnergySeq}
\end{figure}

Finally, Fig.~\ref{fig:VcosKStparam} shows the calculated values of the damping parameter $t$ as a function of iteration number. The parameter values vary substantially between the examples with different $\eps$ and also from one iteration to the next. In particular, for $\eps = 0.1$ the MYKSODA iterations alternate between smaller $t \sim 0.05$ and larger values in the range $0.1 < t < 0.65$. A simpler iterative algorithm could use a fixed $t$ in all iterations, as was done in Ref.~\citenum{Wagner2013}. To explore this possibility, we fixed a conservative value $t=0.05$ for the damping parameter. As seen in Fig.~\ref{fig:ConstTEnergySeq} this yields dramatically slower convergence, showing that $t$ in general needs be chosen adaptively.

\section{Conclusions}

We have given a comprehensive account of the rigorous formulation of Kohn--Sham theory for CDFT. An important point is that textbook treatments of DFT rely on ill-defined functional derivatives~\cite{Lammert2007}. However, recent work has demonstrated that functional derivatives can be made well-defined and rigorous using Moreau--Yosida regularization~\cite{Kvaal2014,KSpaper2018}. We have extended that approach to functional differentiation in CDFT, enabling us to obtain well-defined Kohn--Sham potentials as well as an iteration scheme (MYKSODA). The presented MYKSODA is an algorithm for practical calculations in the setting of ground-state CDFT within a regularized framework. A toy model in the form of a quantum ring is solved numerically, and allowed a study of MYKSODA for the exact universal density functional. The calculations
illustrate the performance of the algorithm and highlight the difference to iteration schemes with a constant damping factor. It is also the first implementation of a Moreau--Yosida regularized Kohn--Sham approach. \newline
\indent While our model was solved numerically with the exact functional, this is of course not feasible for more realistic settings where we must resort to density-functional approximations. This raises the question of how to develop such approximations for the Moreau--Yosida regularized setting, or alternatively, of how to compute the Moreau--Yosida regularization of well-established density-functional approximations. This is an interesting topic for future investigation. \newline
\indent Central to the theory developed here was the concept of compatibility of spaces of densities and current densities. It allows a fully convex formulation of the theory and demands the use of Banach spaces for the basic variables. The respective $\vec L^p$ constraints for current densities were determined optimally in order to complement knowledge from traditional DFT and previous work on CDFT. This article sets the stage for further inquiries into the field, such as the possible full convergence of the iteration scheme and the study of approximate (regularized) functionals for CDFT.

\section*{Acknowledgments}
We thank an anonymous referee for improvements on our proof of Lemma~\ref{lemma:Efiniteness}.
This work was supported by the Norwegian Research Council through the
CoE Hylleraas Centre for Quantum Molecular Sciences Grant No.\ 262695. 
AL is grateful for the hospitality received 
at the Max Planck Institute for the Structure and Dynamics of Matter in Hamburg, while visiting  MP and MR.
MP acknowledges support by the Erwin Schr\"odinger Fellowship J 4107-N27 of the FWF (Austrian Science Fund) and is thankful for an invitation to the Hylleraas Centre just taking place writing this.
AL and SK were supported by ERC-STG-2014 under grant agreement No.~639508.
EIT was supported by the Norwegian Research Council through Grant No.~240674. 

\appendix
\section{A theorem on everywhere defined functionals on spaces of measurable functions}
\label{app:functional}

On any infinite-dimensional Banach space (assuming the axiom of choice) there exist everywhere defined linear maps that are unbounded. The following theorem shows that this cannot happen for linear functionals on spaces of measurable functions that are defined as integrals. The proof is based on a construction by D.~Fischer~\cite{stackexchange}. 
\begin{theorem}\label{thm:functionals}
    Let $B$ be a Banach space consisting of measurable functions $f : \mathbb{R}^n\to\mathbb{R}$. Let $g$ be a  measurable function. Then the functional $T : f \mapsto \int gf \d\mu$ is in $B^*$ if and only if for all $f \in B$, \[ \left| \int fg \; \d\mu \right| < +\infty. \]
\end{theorem}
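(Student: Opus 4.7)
The forward (``only if'') direction is immediate: for $T \in B^*$, boundedness yields $|T(f)| \le \|T\|_{B^*}\|f\|_B < +\infty$ for every $f \in B$, so the integral defining $T(f)$ is automatically finite throughout $B$.

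For the converse, my plan is to apply the Baire Category Theorem to the sublinear functional $p(f) = \int |gf|\d\mu$. I read the hypothesis in the Lebesgue sense, so that $p(f) < +\infty$ for every $f \in B$. Define, for each $N \in \mathbb{N}$,
\[ E_N = \{f \in B : p(f) \le N\}, \]
so that $B = \bigcup_{N \ge 1} E_N$. Each $E_N$ is convex and balanced (invariant under $f \mapsto -f$). Granted closedness of $E_N$, completeness of $B$ combined with Baire's theorem forces some $E_{N_0}$ to have nonempty interior; the balance and convexity of $E_{N_0}$ then place the origin in its interior, yielding $r > 0$ with the property that $\|f\|_B \le r$ implies $p(f) \le N_0$. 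Homogeneity of $p$ upgrades this to $p(f) \le (N_0/r)\|f\|_B$ for all $f \in B$, and hence $|T(f)| \le p(f) \le (N_0/r)\|f\|_B$, i.e.\ $T \in B^*$.

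The delicate step, and the main obstacle I anticipate, is establishing closedness of each $E_N$. The standard route is: given $f_n \to f$ in $B$, pass to a subsequence converging to $f$ pointwise almost everywhere and then invoke Fatou's lemma to conclude $p(f) \le \liminf_n p(f_n) \le N$. Extracting such a subsequence requires some compatibility between the norm topology of $B$ and pointwise convergence; in the $L^p$-type settings used throughout the paper this is standard, since norm convergence in $L^p$ (as well as in intersections or sums of such spaces) yields a subsequence that converges almost everywhere. An equivalent route is to apply the Closed Graph Theorem to the linear map $f \mapsto gf$ from $B$ into $L^1(\R^n)$; the identification of the $L^1$-limit with $gf$ ultimately rests on the same almost-everywhere extraction argument, so the essential analytic content is the same in either formulation.
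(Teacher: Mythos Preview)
Your argument is correct and follows a genuinely different route from the paper's. The paper does not apply Baire to the sublevel sets of $p(f)=\int|gf|\d\mu$; instead it truncates $g$ to bounded, compactly supported $g_n$, verifies by dominated convergence that $T_n(f)=\int fg_n\d\mu\to T(f)$ for each fixed $f$, and then invokes the uniform boundedness principle on the family $\{T_n\}$ to bound $\|T\|_{B^*}$. Your approach is more direct: Baire on the sets $E_N$, together with convexity, balance and homogeneity of $p$, yields boundedness of $T$ without the intermediate approximation step.

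Both arguments rest on the same implicit regularity of $B$, and you are more candid about it than the paper. The paper simply declares each $T_n$ to be a \emph{continuous} linear functional, which is not automatic for an arbitrary Banach space of measurable functions; justifying it ultimately requires the same kind of compatibility between the norm of $B$ and almost-everywhere behaviour that you isolate for the closedness of $E_N$ (or, equivalently, for the Closed Graph identification of the $L^1$-limit with $gf$). In the $L^p$-type spaces actually used in the paper this compatibility is standard, so both proofs succeed there; your version has the merit of pinpointing the needed hypothesis explicitly rather than leaving it tacit.
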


\begin{proof}
Since a bounded linear functional must be everywhere defined, the only if part is trivial.
Suppose $g$ is measurable and that the integral $\int fg \,\d\mu$ exists for all $f \in B$. For $n \in \mathbb{N}$, define a sequence of bounded functions with bounded support,
\[ g_n(x) = \begin{cases} \qquad 0, & \lVert x\rVert > n, \\ \quad\;\; g(x), & \lVert x\rVert \leqslant n \text{ and } \lvert g(x)\rvert \leqslant n, \\ \frac{n}{\lvert g(x)\rvert}\cdot g(x), & \lVert x\rVert \leqslant n < \lvert g(x)\rvert.\end{cases} \]
Then $g_n$ is measurable for all $n$, and $h_n(x) = g_n(x)f(x) \to h(x) = g(x)f(x)$ for all $x$. Moreover $|g_n(x) f(x)| \leq |h(x)|$ for all $x$, the latter function being integrable by assumption. By the dominated convergence theorem,
\[\int_{\mathbb{R}^n} f(x)g_n(x)\d\mu \rightarrow\int_{\mathbb{R}^n} f(x) g(x)\d\mu,\]
as $n \to +\infty$. Thus, the family of continuous linear functionals $T_n \colon f \mapsto \int fg_n\d\mu$ is pointwise bounded.

The uniform boundedness principle states that a family $\{T_n\}$ of pointwise bounded linear functionals is in fact uniformly bounded. Thus, $\sup_n\|T_n\|_{B^*} < +\infty$. It then follows that 
\[\lVert T\rVert_{B^*}  \leq \sup_n  \|T_n\|_{B^*}  < +\infty. \]
Hence, $T \in B^*$. 
\end{proof}

%\vfill\null

\bibliography{refs}
\bibliographystyle{achemso}

\end{document}